\newcommand{\bib}{bibliography.bib}
\begin{document}

\title{TWICE: Tree-based Wage Inference with Clustering and Estimation}

\author{Aslan Bakirov, Francesco Del Prato, Paolo Zacchia
%
\thanks{Aslan Bakirov: \emph{CERGE-EI}, \href{mailto:aslan.bakirov@cerge-ei.cz}{\texttt{aslan.bakirov@cerge-ei.cz}}. 
Francesco Del Prato: \emph{Department of Economics and Business Economics, Aarhus University}, \href{mailto:francesco.delprato@econ.au.dk}{\texttt{francesco.delprato@econ.au.dk}}. 
Paolo Zacchia: \emph{Ca' Foscari University of Venice}, \emph{CERGE-EI}, and \emph{CEPR}, \href{mailto:paolo.zacchia@cerge-ei.cz}{\texttt{paolo.zacchia@cerge-ei.cz}}.
This paper results from research funded under the umbrella of the ERC-CZ project no. LL2319.
}}

\date{January 2026}


\begin{titlepage}
\maketitle


How much do worker skills, firm pay policies, and their interaction contribute to wage inequality? 
Standard approaches rely on latent fixed effects identified through worker mobility, but sparse networks inflate variance estimates, additivity assumptions rule out complementarities, and the resulting decompositions lack interpretability. 
We propose TWICE—Tree-based Wage Inference with Clustering and Estimation—a framework that models the conditional wage function directly from observables using gradient-boosted trees, replacing latent effects with interpretable, observable-anchored partitions. 
This trades off the ability to capture idiosyncratic unobservables for robustness to sampling noise and out-of-sample portability. 
Applied to Portuguese administrative data, TWICE outperforms linear benchmarks out of sample and reveals that sorting and non-additive interactions explain substantially more wage dispersion than implied by standard AKM estimates.

\end{titlepage}

\section{Introduction}\label{s:introduction}

A central question in labor economics is to what extent earnings inequality arises from differences among workers, the firms that employ them, and the patterns of sorting between the two. 
The seminal framework of \cite{abowd1999high} (AKM) provided a powerful empirical tool for decomposing wages into worker and firm fixed effects, revealing that firms account for a substantial share of wage variation. 
For over two decades, this two-way fixed effects model has been the workhorse for studying the sources of inequality, the firm-size wage premium, and the nature of labor market sorting.

Recent work, however, has identified core challenges to this paradigm. 
The first concerns “limited mobility bias”: 
because firm effects are solely identified from workers who move between employers, sparse mobility networks can lead to noisy estimates that overstate firm heterogeneity while understating sorting \citep{andrews2008high, kline2020leave, bonhomme2023much}.
The second concerns the model's restrictive additivity assumption. 
\cite{bonhomme2019distributional} (BLM) address both issues by grouping firms into a finite number of latent classes inferred from earnings distributions, allowing for unrestricted worker-firm interactions.
They show that sorting estimates are large and that firm effects are modest. 
Yet because BLM's classes are latent, they can blur the distinction between wage-setting policies and workforce composition, and the results remain inherently in-sample—offering limited guidance on why certain firms pay more or what observable characteristics drive sorting.

\paragraph{This paper}
This paper introduces \emph{TWICE}—Tree-based Wage Inference with Clustering and Estimation—a framework that models the conditional expectation of wages directly from measurable worker and firm characteristics rather than from latent effects identified through mobility. 
Worker observables include education, tenure, age, and occupation; firm observables include size, productivity, financial structure, and sector. 
This shift trades the ability to capture purely idiosyncratic unobservables for three practical advantages: robustness to the sampling noise that plagues AKM estimates in sparse networks; out-of-sample portability to new workers and firms; and direct interpretability of how specific characteristics shape wages. 
Despite relying on flexible machine learning methods, TWICE retains the two-way decomposition structure of the AKM tradition and produces directly comparable outputs: variance shares for worker, firm, sorting, and interaction components.

\paragraph{TWICE}
The TWICE procedure has three steps. 
First, we partition workers and firms into discrete cells based on observable attributes using supervised decision trees—yielding interpretable groups such as “mid-tenure, tertiary-educated technicians” or “large, solvent manufacturers.” 
Second, we estimate the conditional wage function using gradient-boosted trees with cross-fitting that respects the two-sided dependence structure of matched data: workers and firms are split into blocks, and each observation is predicted by a model trained on data excluding that worker and firm \citep{chernozhukov2018double, chiang2022multiway}. 
Third, from this estimated function we extract variance decompositions, sorting patterns, and Partial Dependence and Accumulated Local Effect plots that show how individual characteristics shape predicted wages.

\paragraph{Applications}
Applying TWICE to comprehensive Portuguese administrative data, we document three main findings. 
First, the model achieves substantially better out-of-sample predictive performance than linear benchmarks (test $R^2 \approx 0.50$), indicating that it captures structural features of the wage-setting process rather than noise. 
Second, our variance decomposition—based on observable-anchored cells rather than latent effects—reveals a strong role for sorting on observables (11.6\% of wage variance) and modest but nonzero non-additive interactions (7.3\%), compared to 7.2\% sorting and no interaction term in the standard AKM specification on the same data. 
This pattern is consistent with the bias-corrected results of \citet{bonhomme2019distributional} and the latent-class analysis of BLM: once correlated heterogeneity is modeled flexibly, sorting accounts for more wage dispersion than the additive AKM model suggests, while pure firm effects are smaller. 
Third, we find that our observable-based firm classes explain approximately 25\% of the variation in AKM firm effects, confirming that TWICE captures a meaningful component of firm pay premia while remaining agnostic about idiosyncratic variation—much of it estimation error—that dominates AKM estimates in sparse networks.

\paragraph{Interpretability}
A key contribution of TWICE is interpretability. 
To open the “black box” of machine learning, we apply Partial Dependence Plots \citep{friedman2001greedy, goldstein2015peeking} and Accumulated Local Effects \citep{apley2020visualizing}—diagnostic tools standard in machine learning but, to our knowledge, novel in wage decomposition.
These diagnostics recover economically meaningful patterns: age--wage profiles are concave; returns to tenure vary by worker qualification; wages rise with firm productivity, consistent with rent-sharing. 
Most strikingly, the canonical firm-size wage premium vanishes once worker and firm observables are flexibly controlled, suggesting that the premium reflects sorting—larger firms employ higher-skilled workers and have other high-wage characteristics—rather than a direct effect of size.

\paragraph{Contribution to the literature}
Our work makes three principal contributions to the literature on two-way heterogeneity and wage dispersion.

First, we contribute to the literature on limited mobility bias in AKM estimates.
Standard fixed-effects estimators suffer from an incidental parameter problem that inflates firm-effect variance and attenuates sorting estimates \citep{andrews2008high, kline2020leave, bonhomme2023much, kline2024firm}. 
Our observable-anchored approach sidesteps this mechanism: because we estimate group-level conditional means rather than individual fixed effects, the bias does not apply.
Consistent with this interpretation, our variance decomposition yields a sorting share substantially larger than the uncorrected AKM estimate on the same data.

Second, relative to the latent-class framework of \citet{bonhomme2019distributional}, we replace latent classes with observable-anchored partitions. 
BLM clusters firms using earnings distributions, which can conflate wage-setting policies with workforce composition; our partitions are functions of measured firm and worker attributes, making them directly interpretable. 
Because these partitions are defined by observables rather than by sample identifiers, they extend naturally to new workers and firms—enabling out-of-sample prediction and policy-relevant counterfactuals that latent-class or fixed-effect methods cannot readily support.
Despite this shift, we retain AKM-style outputs—variance shares, sorting matrices—facilitating comparison with the existing literature.

Third, we bring modern machine learning to matched employer--employee data in a transparent way. 
We adapt the cross-fitting principle of \citet{chernozhukov2018double} to the two-way dependence structure of matched data \citep{chiang2022multiway}, and we interpret the resulting estimates using Partial Dependence Plots and Accumulated Local Effects. 
Our findings connect to a long literature on firm contributions to inequality \citep{card2013workplace, barth2016s, song2019firming}: once observables and their interactions are modeled flexibly, sorting and non-additive complementarities account for a larger share of wage dispersion than standard methods suggest.

\paragraph{Outline}
The remainder of the paper is organized as follows.
Section \ref{s:methodology} develops the framework.
Section \ref{s:results} presents empirical results using Portuguese data.
Last, section \ref{s:conclusions} concludes.

\section{A predictive two-way wage structure with observable-anchored partitions}\label{s:methodology}

We develop a framework that models wages flexibly as a function of rich worker and firm observables while summarizing two-sided heterogeneity through feature-anchored partitions (cells) for workers and firms.

\subsection{Objective and approach}

Our goal is to model the conditional expectation of wages as a flexible function of observable worker and firm attributes.
From this estimated function, we extract interpretable summaries—variance decompositions, sorting patterns, and marginal effects—that parallel the outputs of the AKM tradition while being anchored in observables rather than latent effects.

This objective differs from the standard approach in the literature. 
The AKM model expresses wages as the sum of latent worker and firm fixed effects plus time-varying controls. 
Identification relies on exogenous mobility and additivity, and firm effects are identified only within the largest connected set. 
AKM has been central to measuring two-way heterogeneity, but it treats observables as residual controls and does not map them explicitly into wage premia—limiting both interpretability and out-of-sample use.

The BLM model reduces dimensionality by clustering firms into latent classes and estimating a finite mixture of worker types conditional on these classes. 
This relaxes additivity and, by coarsening the firm space, mitigates limited-mobility bias.
However, latent classes remain functions of earnings distributions rather than of observable firm characteristics, again limiting interpretability and portability to new firms.\footnote{%
Appendix \ref{a:AKM_BLM} provides a deeper review.
For a comprehensive treatment, see \citet{kline2024firm}.} 

We take a complementary approach: estimate the conditional expectation function nonparametrically, summarize it through observable-anchored partitions, and retain the two-way decomposition structure while gaining interpretability and out-of-sample portability.

\subsection{Population estimands and decomposition}\label{ss:population_estimands}

Before detailing the estimation procedure, we formally define the population objects of interest. 
Let $Y_{it}$ denote the log wage of worker $i$ at time $t$, employed at firm $j=J(i,t)$. 
Let $Z_{it}$ and $X_j$ denote the vectors of observable characteristics for the worker and the firm, respectively.

We assume the data is generated by a conditional expectation function (CEF) $m_0$, such that:
\begin{equation*}
    Y_{it} = m_0(X_j, Z_{it}) + u_{it}
\end{equation*}
where $\E{u_{it} \mid X_j, Z_{it}} = 0$. 
The function $m_0(\cdot)$ captures the mapping from observables to wages in the population. 
Our goal is to approximate this function and decompose the resulting variation into components attributable to worker types, firm types, and their interactions.

To reduce dimensionality and facilitate economic interpretation, we summarize the continuous function $m_0(\cdot)$ by projecting it onto discrete worker and firm partitions. 
Fix integers $L$ and $K$, and let $g:\mathcal{Z}\to\{1,\dots,L\}$ and $h:\mathcal{X}\to\{1,\dots,K\}$ denote candidate worker and firm partition functions. 
For any pair $(g,h)$, define the corresponding cell means
\[
\mu_{\ell k}(g,h) \equiv \E{Y_{it} \mid g(Z_{it}) = \ell, h(X_j) = k}
\]
We then define the \emph{L$_2$-optimal population partitions} $(g^*,h^*)$ as the solution to
\begin{equation}\label{eq:argmin}
    (g^*, h^*) = \arg\min_{g, h} \E{ \bp{ m_0(X_j, Z_{it}) - \mu_{g(Z_{it}), h(X_j)} (g,h) }^2 }
\end{equation}
That is, $(g^*,h^*)$ yields the best step-function approximation to the true conditional expectation $m_0$ among all partitions with $L$ worker groups and $K$ firm groups.\footnote{%
We do not assume that wages are truly stepwise in $(g^*,h^*)$; instead, these partitions are population summaries of $m_0$, which TWICE approximates using tree-based methods in finite samples.}

For notational convenience, write
\[
\mu_{\ell k} \equiv \mu_{\ell k}(g^*,h^*)
= \E{Y_{it} \mid g^*(Z_{it}) = \ell, h^*(X_j)=k},
\]
and let $\mu \equiv \E{Y_{it}}$ denote the grand mean. 
 We decompose each cell mean into worker, firm, and interaction components as
\[
\mu_{\ell k} = \mu + \alpha_\ell + \psi_k + \kappa_{\ell k}
\]
where the worker and firm premia are defined via an additive projection to ensure an orthogonal decomposition.
Specifically, $(\alpha^*, \psi^*)$ solve the weighted least-squares problem
\[
    (\alpha^*, \psi^*) = \arg \min_{\alpha,\psi} \sum_{\ell, k} \pi_{\ell k} \bp{\mu_{\ell k} - \mu - \alpha_\ell - \psi_k}^2
\]
where $\pi_{\ell k} = \Pr{g^*(Z_{it}) = \ell, h^*(X_{j}) = k}$.
This is equivalent to regressing cell means on worker-group and firm-group indicators, weighted by cell sizes.
By construction, $\sum_\ell \pi_\ell \alpha_\ell^* = 0$ and $\sum_k \pi_k \psi_k^* = 0$.

The key property of this projection is orthogonality: the interaction term 
\[
\kappa_{\ell k} = \mu_{\ell k} - \mu - \alpha^*_\ell - \psi^*_k
\]
which captures deviations from additivity at the cell level, is uncorrelated with both $\alpha^*$ and $\psi^*$ by construction, ensuring that the variance components sum exactly to the total without cross-terms.\footnote{%
When the distribution of workers across firms is balanced ($\pi_{\ell k} = \pi_\ell \pi_k$, the projected effects coincide with the conditional means $\E{\mu_{\ell k} \mid g = \ell}$. 
Under sorting, however, these differ: the projection isolates the “pure” worker and firm  contributions from their correlation induced by assortative matching.
}
When $\kappa_{\ell k}=0$ for all $(\ell,k)$, the systematic component of wages is additive in worker and firm types; nonzero values of $\kappa_{\ell k}$ reflect complementarities or mismatch that are specific to particular worker–firm type pairs.

This yields the following variance decomposition of log wages:
\begin{equation*}
    \begin{split}
    \var{Y_{it}} 
    = \underbrace{\var{\alpha_{g^*(Z_{it})}}}_{\text{worker}} 
    + &\underbrace{\var{\psi_{h^*(X_j)}}}_{\text{firm}} 
    + \underbrace{2 \cov{\alpha_{g^*(Z_{it})}, \psi_{h^*(X_j)}}}_{\text{sorting}}  \\
    + &\underbrace{\var{\kappa_{g^*(Z_{it}), h^*(X_j)}}}_{\text{interaction}} 
    + \underbrace{\var{\xi_{it}}}_{\text{residual}}
    \end{split}
\end{equation*}
The residual term is defined as
\[
\xi_{it} = Y_{it} - \mu_{g^*(Z_{it}), h^*(X_j)}
\]
so that $\E{\xi_{it} \mid g^*(Z_{it}), h^*(X_j)} = 0$. 
It captures three distinct sources of variation: \emph{i)} the pure stochastic error $u_{it}$,
\emph{ii)} idiosyncratic match effects not captured by the group-level means,
and \emph{iii)} the approximation error arising from discretizing $m_0(\cdot)$ into finite cells (i.e., within-cell heterogeneity).
By construction of the additive projection, $\cov{\alpha^*, \kappa^*} = \cov{\psi^*, \kappa^*} = 0$, so no additional cross-terms appear. 
This contrasts with a naive decomposition using marginal means, where cross-covariances between marginal effects and interactions are generally nonzero under sorting and must be accounted for separately.

The TWICE procedure, described below, estimates these population objects by replacing the theoretical minimization in eq. \eqref{eq:argmin} with gradient-boosted decision trees, and estimating the cell means $\mu_{\ell k}$ via cross-fitted regression.
Appendix \ref{a:theory} provides the formal derivation of these components from cell means and the consistency conditions for the estimator.

\subsection{Estimation}\label{ss:estimation}

We estimate the conditional wage function $m_0$ nonparametrically using gradient-boosted trees. 
This choice is driven by two features of wage determination and one of the data structure.

\begin{itemize}
    \item \emph{Nonlinearity and interactions}.
    Returns to worker characteristics (e.g., education, experience, occupation) often depend on firm attributes (e.g., size, productivity, solvency). 
    Trees capture such interactions without requiring pre-specified functional forms or cross-terms.
    
    \item \emph{Regularization and generalization}.
    Boosted trees incorporate shrinkage, depth constraints, and early stopping, which control model complexity and deliver stable out-of-sample predictions—essential for interpreting the fitted structure through worker and firm partitions.

    \item \emph{Dependence in matched employer–employee data}.
    Observations sharing a worker or firm identifier are strongly dependent:
    standard cross-validation would be invalid because the same identifier could appear in both training and test sets.
    We address this using two-way identifier-blocked cross-fitting: workers and firms are each partitioned into disjoint blocks, and each observation is predicted by a model trained on data that excludes the block containing that worker and the block containing that firm \citep{chernozhukov2018double, chiang2022multiway}.
    This yields unbiased out-of-sample fit measures despite the multiway dependence in matched data.\footnote{%
    Dependence may be arbitrarily strong within worker and firm IDs but is assumed weak across distinct IDs; see Appendix \ref{a:theory}.}
\end{itemize}
\medskip

Because the fitted model conditions directly on observables, its predictions can be decomposed into contributions associated with specific measurable traits. 
We examine these relationships in Sections \ref{ss:pdp} and \ref{ss:ale} using Partial Dependence and Accumulated Local Effects plots, which characterize the fitted associations without imposing parametric restrictions.

\subsection{TWICE}\label{ss:twice}
We propose \emph{Tree-based Wage Inference with Clustering and Estimation} (TWICE), a procedure that estimates the conditional wage function using observables and summarizes two-sided heterogeneity through \emph{observable-anchored partitions} (cells) for workers and firms. 
Let
\[
Y_{it} = m_0(X_j,Z_{it}) + u_{it}
\]
denote the data-generating process in Section~\ref{ss:population_estimands}, and let $f$ denote the cross-fitted gradient-boosted tree estimator of $m_0$ described in Section~\ref{ss:estimation}.
TWICE consists of three steps.

\paragraph{I. Firm- and worker-cells from observables} 
We first build discrete, interpretable partitions (cells) from observables.

On the firm side, we estimate a supervised tree regression of firm-level mean log wages $Y_j$ on firm covariates $X_j$ and use the terminal leaves of a selected tree as \emph{firm-cells}. 
These cells group firms with similar wage-setting behavior conditional on observables (e.g., “large solvent manufacturers”).

On the worker side, we use the full panel and estimate a tree regression of individual log wages $Y_{it}$ on time-varying worker covariates $Z_{it}$ (age, tenure, education, occupation, etc.). 
The terminal leaves of this tree define \emph{worker-cells} at the observation level: each $(i,t)$ is assigned to a cell $g(Z_{it})\in\{1,\dots,L\}$ according to its observable characteristics. In the variance decomposition, worker components are evaluated at these observation-specific cells $g(Z_{it})$.

For both firm and worker trees, the number of leaves (depth) is chosen by out-of-sample performance. 
Because leaves are functions of observables, cells are directly interpretable (e.g., “mid-tenure tertiary-educated technicians”).

\paragraph{II. Wage predictor with two-way cross-fitting}
We then fit a LightGBM predictor for $m_0(\cdot)$ using the full observable set and the learned cell indicators. 
To avoid overfitting and obtain unbiased out-of-sample predictions, we implement two-way ID-blocked cross-fitting, following \cite{chernozhukov2018double} and \citet{chiang2022multiway}.
We partition worker IDs into $B$ blocks and firm IDs into $B$ blocks, form $B^2$ validation cells $S_{ab}$. 
For each $(a,b)$, we train on the complement of $S_{ab}$ and predict only for $S_{ab}$, so that no worker or firm appears in both training and validation for its own prediction. 
We select hyperparameters, including the number of worker and firm cells (granularity), by minimizing this blocked out-of-sample risk (see Appendix \ref{a:twice} for details).
All folds and holdouts are constructed within the largest connected set of the mobility graph to maintain comparability with AKM.
Further details on cross-fitting and implementation appear in Subsection~\ref{ss:implementation}.

\paragraph{III. Descriptive outputs: sorting and variance decomposition} 
Using the worker- and firm-cells, we compute the sorting matrix (shares of worker-cells across firm-cells) and decompose log-wage variance into worker-cell, firm-cell, sorting, interaction (cell non-additivity), and within-cell components, as in Section~\ref{ss:population_estimands}. 
For interpretability, we probe $f(\cdot)$ with Partial Dependence and Accumulated Local Effects to summarize how predicted wages vary with key firm features (size, revenue per worker, solvency, capital intensity) and worker features (age, tenure, education/qualification).

\paragraph{Advantages}
Our approach assumes that the salient heterogeneity in wages can be well approximated using observable worker and firm attributes. 
It is conceptually related to BLM’s idea of coarsening the firm space, but differs in three important ways.

First, the partitions are \emph{observable-anchored and transparent}. 
As discussed in Section \ref{ss:estimation}, cells correspond to explicit combinations of measurable traits, enabling direct economic interpretation.

Second, the clustering is \emph{conditional and supervised}. 
BLM classifies firms by (residualized) earnings distributions; class formation does not directly incorporate rich firm covariates.
We instead let observables drive the partitions: cells are learned from $X_j$ (for firms) and $Z_{i\cdot}$ (for workers) in a supervised way. 
This reduces the risk of conflating workforce composition with wage-setting and yields partitions that can be applied to new firms and workers.

Third, the framework is designed for \emph{out-of-sample validity}. 
By combining boosted trees with ID-blocked cross-fitting, TWICE can handle high-dimensional covariates while controlling overfitting in matched employer–employee data.

\subsection{Interpretation and identification}\label{ss:identification}
While TWICE shares the two-way variance decomposition structure of AKM, the identification of the components relies on different sources of variation. 
The standard AKM model identifies firm effects via worker mobility, relying on the strict exogeneity of moves conditional on person and firm fixed effects. 
In contrast, TWICE identifies firm and worker contributions based on the mapping between realized wages and observable characteristics ($X_j, Z_{it}$).

To formalize the contrast, consider the latent AKM firm effect $\psi_j$. 
This parameter captures the time-invariant firm-specific wage premia common to all workers at the firm $j$, regardless of its source. 
We can decompose it as
\[
\psi_j = \phi(X_j) + \nu_j
\]
where $\phi(X_j)$ represents the component that is systematically associated with observable firm characteristics, and $\nu_j$ is a residual term.
This residual bundles together: 
(i) genuine unobserved firm heterogeneity, and 
(ii) sampling noise inherent in estimating firm effects from sparse mobility data.

This decomposition highlights the identification trade-off between AKM and TWICE.
AKM target the total firm premium $\psi_j$ and, in principle, recovers both $\phi(X_j)$ and $\nu_j$.
However, when mobility is limited, the estimated dispersion of firm effects and the estimated worker–firm covariance are affected by sampling variability: the variance of $\widehat\psi_j$ is inflated and the covariance between worker and firm effects is attenuated \citep{andrews2008high, bonhomme2019distributional}.
    
TWICE, by contrast, does not use the mobility graph to identify firm premia.
It estimates the conditional expectation function $m_0(X_j,Z_{it})$ and then summarizes it through observable-based partitions $(g^*,h^*)$.
The resulting firm component in the variance decomposition captures the part of wage premia that is predictable from observables $X_j$.
Conceptually, this corresponds to $\phi(X_j)$ rather than $\psi_j$:
idiosyncratic fluctuations and firm-specific factors that are orthogonal to $X_j$ are absorbed into the residual and interaction terms, rather than into the firm component.
Consequently, the firm component in TWICE should be interpreted as the \emph{observable firm wage premium}. 

This distinction extends directly to the interpretation of sorting.
In the AKM framework, sorting is the covariance between total latent worker and firm effects, $2\,\cov{\alpha_i, \psi_{J(i,t)}}$, so it reflects assortative matching on latent heterogeneity.
In TWICE, sorting represents \emph{assortative matching on observables}, $2\,\cov{\alpha_{g^*(Z)}, \psi_{h^*(X)}}$, i.e., the extent to which workers with high-wage observable attributes match with firms possessing high-wage observable attributes.

The cost of this shift is an omitted-variable concern on the firm side.
If high-wage firms pay premia for reasons strictly orthogonal to our observable $X_j$, and these reasons are not indirectly reflected in observable sorting patterns, TWICE will not attribute that variation to the firm component.
Instead, it will appear in the residual or interaction terms.
In our application, however, the administrative data contain rich financial and workforce information at the firm level and detailed qualifications on the worker side.
Existing two-way methods primarily exploit mobility to uncover latent heterogeneity and typically use such observables only as controls.
TWICE is designed to complement that literature by asking how much of wage dispersion and sorting can already be accounted for by this rich observable structure, and by providing a decomposition that is explicitly framed in terms of observable firm wage premia and assortative matching on observables.

\section{An application to Portuguese data}\label{s:results}

In this section, we showcase an application of the TWICE framework to Portuguese data. 
We analyze sorting patterns—the extent to which high-wage workers match with high-wage firms—and compare our findings to BLM. 
Additionally, we show how PDPs and ALEs can display nonparametric relationships between observables in wage determination.

\subsection{Data}

\paragraph{Sources}
We combine two Portuguese administrative datasets from \emph{Statistics Portugal}.
The matched employer-employee database \emph{Quadros de Pessoal} (QP) covers the universe of private-sector firms, providing worker-level information on age, gender, education, occupation, tenure, and earnings, as well as firm-level information on location, industry, and employment.
We supplement QP with firm financial data from \emph{Sistema de Contas Integradas das Empresas} (SCIE), which provides balance-sheet items including revenues, assets, liabilities, and investment.

\paragraph{Sample}
Our analysis spans 2012--2019, providing a recent pre-pandemic window.
We restrict attention to private-sector firms with at least five employees and full-time workers aged 20--65 with at least two months of tenure.
The outcome variable is log hourly wages, constructed from annual earnings and hours worked.
After applying these filters and retaining only observations in the largest connected set (required for AKM comparability), the estimation sample contains approximately 3.4 million worker-year observations, covering 750,000 unique workers across 96,000 firms.

Table~\ref{tab:stat_panel} summarizes the panel.
Employment grows steadily over the period (firms: 43k to 55k; workers: 350k to 507k), while average log wages rise from 1.97 to 2.10.
The workforce gradually upskills: the share of tertiary graduates increases from 17\% to 22\%.
Table~\ref{tab:stat_industry_19} reveals substantial cross-industry heterogeneity: finance pays the highest wages (mean log wage 2.51) and employs the most graduates (72\%), while manufacturing and agriculture sit below the aggregate average.

\paragraph{Covariates}
TWICE uses two sets of observable characteristics.
\emph{Worker covariates} $Z_{it}$ include age, tenure, education, occupation (qualification), and job seniority—time-varying attributes that capture human capital accumulation and job-specific experience.
Notably, we exclude calendar year from the worker clustering stage, so that workers are grouped by where they are in their career rather than when they are observed.
\emph{Firm covariates} $X_j$ include size (log employment), productivity (log revenue per worker), financial structure (solvency ratio, asset composition), investment (R\&D, patents, capital intensity), and market position (concentration index), along with sector, region, legal form, and calendar year.
Appendix~\ref{app:data} provides full variable definitions and descriptive statistics.

\paragraph{Mobility and connectivity}
AKM identification requires a connected mobility network.
In our sample, workers are employed at 2.2 firms on average, and 21\% work at three or more firms during the panel.
The largest connected set retains 99.6\% of workers, indicating a well-connected labor market.
To validate the exogenous mobility assumption underlying the AKM benchmark, Appendix~\ref{app:event_study} replicates the event-study design of \cite{card2013workplace}: wage profiles are flat before job transitions, and gains from moving up the firm distribution are symmetric to losses from moving down, supporting additivity.

\subsection{Implementation}\label{ss:implementation}

This subsection describes how the TWICE framework is implemented in practice, consistent with Section~\ref{ss:twice}.
The procedure consists of three components: (i) constructing partitions of firms and workers based on observables, (ii) estimating the wage function with two-way cross-fitting on the connected set, and (iii) selecting the optimal granularity of the partitions by out-of-sample risk minimization.

\paragraph{Firm classification}
We first construct firm partitions based on observed firm characteristics.  
Let $Y_j$ denote the mean log-wage among employees of firm $j$, and $X_j$ the corresponding firm covariates.  
For a candidate number of groups $K$, we estimate a supervised tree-based model of the form
\[
\widehat f_K = \arg\min_{f\in\Omega_K}\sum_j \bs{Y_j - f(X_j)}^2
\]
where $\Omega_K$ is the class of regression trees with at most $K$ terminal nodes.  
The resulting leaves define firm classes $\kappa_K(j)\in\{1,\dots,K\}$, which summarize heterogeneity in pay-setting conditional on observables such as size, productivity, and sector.  
We interpret each class as a group of firms with similar wage-setting behavior.

\paragraph{Worker classification}
Analogously, we partition workers into $L$ classes based on their observable characteristics and log wages.
For computational efficiency and to reduce noise, we first aggregate to the worker level: let $\bar{Y}_i$ and $\bar{Z}_i$ denote, respectively, the mean log wage and the vector of averaged time-varying covariates for worker $i$, combined with their time-invariant attributes.
We then estimate a supervised tree
\[
\widehat g_L = \arg\min_{g\in\Omega_L}\sum_{i} \bs{\bar{Y}_{i}-g(\bar{Z}_{i})}^2,
\]
where $\Omega_L$ denotes the class of regression trees with at most $L$ leaves.
The resulting splitting rules define an estimated partition function $\widehat g_L:\mathcal Z\to\{1,\dots,L\}$, which we then apply at the observation level: each $(i,t)$ is assigned to group
\[
\ell_{it} = \widehat g_L(Z_{it}) \in \{1,\dots,L\}
\]
based on its period-specific covariates.
Because some components of $Z_{it}$ are time-varying (e.g., age, tenure), the assigned worker class $\ell_{it}$ may change over time for the same individual.

An important design choice concerns the treatment of calendar time in the clustering stage. 
Worker clusters are formed using career-stage characteristics—age, tenure, education, and occupation—but not calendar year. 
This means workers are assigned to clusters based on where they are in their working life rather than when they are observed: a 35-year-old with 10 years of tenure will be in the same worker cluster whether observed in 2012 or 2019. 
Firm clusters, by contrast, incorporate calendar year alongside balance-sheet and structural characteristics.
This allows the same firm to move between clusters as macroeconomic conditions change. 
The asymmetric treatment of time reflects the economic structure of wage determination: individual worker productivity depends primarily on accumulated human capital (career stage), while firm wage policies respond to business cycle conditions and evolving labor market tightness (calendar time).

\paragraph{Wage prediction and model selection}
Given the firm and worker partitions, we estimate the wage function on the largest connected set of the mobility graph using two-way cross-fitting as described in Section~\ref{s:methodology}.  
For each candidate pair $(K,L)$, we compute the blocked out-of-sample loss $\mathcal{L}(K,L)$ defined in Appendix \ref{a:twice} and select
\[
(K^*, L^*) = \arg\min_{K,L} \mathcal{L}(K,L)
\]

In a final step, we re-estimate the model on the training portion of the connected set using $(K^\ast,L^\ast)$ and evaluate its predictive performance on an external firm-level holdout: a random subset of firms, and all their associated worker--year observations, that are excluded from both training and tuning.
This holdout provides an out-of-sample benchmark distinct from the cross-fitting blocks used for model selection.

\subsection{Out-of-sample fit}\label{ss:oos}

We evaluate predictive performance using an external firm-level holdout. 
After restricting to the connected set, we draw a random subset of firms, sample workers within those firms, and exclude all corresponding observations from any model fitting, including cross-fitting and hyperparameter tuning.
All tuning and cross-fitting (Section~\ref{s:methodology}) are conducted on the remaining firms.
Once the tuning parameters are selected, we re-estimate each model on the training firms and compute performance on the held-out firms.
In machine-learning parlance, the cross-fitting folds play the role of a validation sample, while the firm holdout serves as a final test sample.

As baselines, we estimate a sequence of OLS models on the same sample and covariates used by TWICE. 
We report both training and test metrics, with emphasis on the test results. 
The baselines are: 
(i) a “simple” OLS with an age polynomial (normalized to be flat at 40) and year fixed effects; 
(ii) an OLS with the full set of continuous covariates and categorical indicators; and 
(iii) OLS variants that add polynomial expansions of continuous covariates up to degree three.\footnote{%
To avoid numerical instabilities and spurious collinearity, higher-order specifications only apply polynomial expansions to a restricted subset of continuous variables (age, tenure, log-size, log-revenue) while leaving other predictors linear.}
The TWICE model uses the same covariates plus the worker- and firm-cell indicators and is estimated with two-way cross-fitting.

\begin{table}[t]
\caption{Test sample performance of the models}
\label{tab:oos}
\centering
{
\begin{tabular*}{.9\textwidth}[]{p{3cm}@{\extracolsep\fill}cccc}
\toprule
& \multicolumn{2}{c}{Train} & \multicolumn{2}{c}{Test} \\
\cmidrule(lr){2-3} \cmidrule(lr){4-5}
Model & MSE & $R^2$ & MSE & $R^2$ \\
\midrule
OLS & 0.166 & 0.039 & 0.174 & 0.043 \\
OLS, degree-1 poly & 0.098 & 0.431 & 0.106 & 0.415 \\
OLS, degree-2 poly & 0.096 & 0.446 & 0.109 & 0.405 \\
OLS, degree-3 poly & 0.095 & 0.447 & 0.108 & 0.408 \\
TWICE & 0.076 & 0.566 & 0.092 & 0.493 \\
\bottomrule
\end{tabular*}
}

\note[Note]{
This table reports out-of-sample performance metrics for the prediction of log hourly wages on the Portuguese linked employer-employee data (2012–2019). 
The test set is constructed by holding out a random sample of firms (and all their workers) that were not used in model training or hyperparameter tuning. 
MSE is the Mean Squared Error.
$R^2$ is the squared correlation between observed and predicted wages in the test set. 
“OLS” refers to a linear regression with age polynomials and year fixed effects. 
Higher-degree OLS models include polynomial expansions of continuous covariates up to the specified degree. 
“TWICE” refers to the gradient-boosted tree ensemble estimated via two-way cross-fitting.
}
\end{table}

\paragraph{Results}
Table~\ref{tab:oos} compares the out-of-sample predictive performance of the OLS baselines and TWICE.
Introducing polynomial flexibility substantially improves in-sample fit for OLS, but only modestly improves out-of-sample accuracy: test $R^2$ rises from 0.04 to about 0.40 at degree 3.
Even with selective polynomialization and orthogonal bases, linear models reach a plateau in explanatory power, suggesting that key nonlinearities and interactions cannot be captured by additive polynomial terms alone.

In contrast, the TWICE model achieves a markedly better out-of-sample performance, with a test $R^2 \approx 0.50$ and a test MSE roughly 15\% lower than the best OLS variant.
This gap persists despite TWICE’s greater flexibility, underscoring the effectiveness of its tree-based regularization and cross-fitting scheme in balancing flexibility and parsimony.
Unlike the higher-degree OLS models, which improve fit primarily in-sample, TWICE generalizes smoothly to new firms and workers, indicating that it captures structural features of the wage-setting process rather than sampling noise.

The robust generalization of TWICE also enhances the interpretability of the partial dependence plots (PDPs) presented in Subsection~\ref{ss:pdp}.
Because these plots rely on a model’s ability to approximate true conditional expectations, their credibility depends on out-of-sample performance.
The consistent predictive accuracy of TWICE provides confidence that the PDPs reflect genuine wage–covariate relationships rather than overfitted artifacts.

\subsection{Sorting}

We now examine how the model’s estimated worker and firm heterogeneity translate into sorting patterns.
Following the logic of \cite{bonhomme2019distributional}, we analyze how the distribution of worker types varies across firm types, using the cells recovered by TWICE.
Unlike the latent worker and firm types in the BLM framework, these classes are data-driven partitions formed by the TWICE trees based on covariates.
A firm class should be read as a group of firms with similar observable profiles and, therefore, similar predicted wage premia for a generic worker; a worker class collects individuals with similar observable profiles and, hence, similar predicted wage schedules across firms.

\begin{figure}[t]
\caption{Worker–firm sorting patterns}
\label{fig:sorting_plot}
\centering
\includegraphics[width=0.85\textwidth]{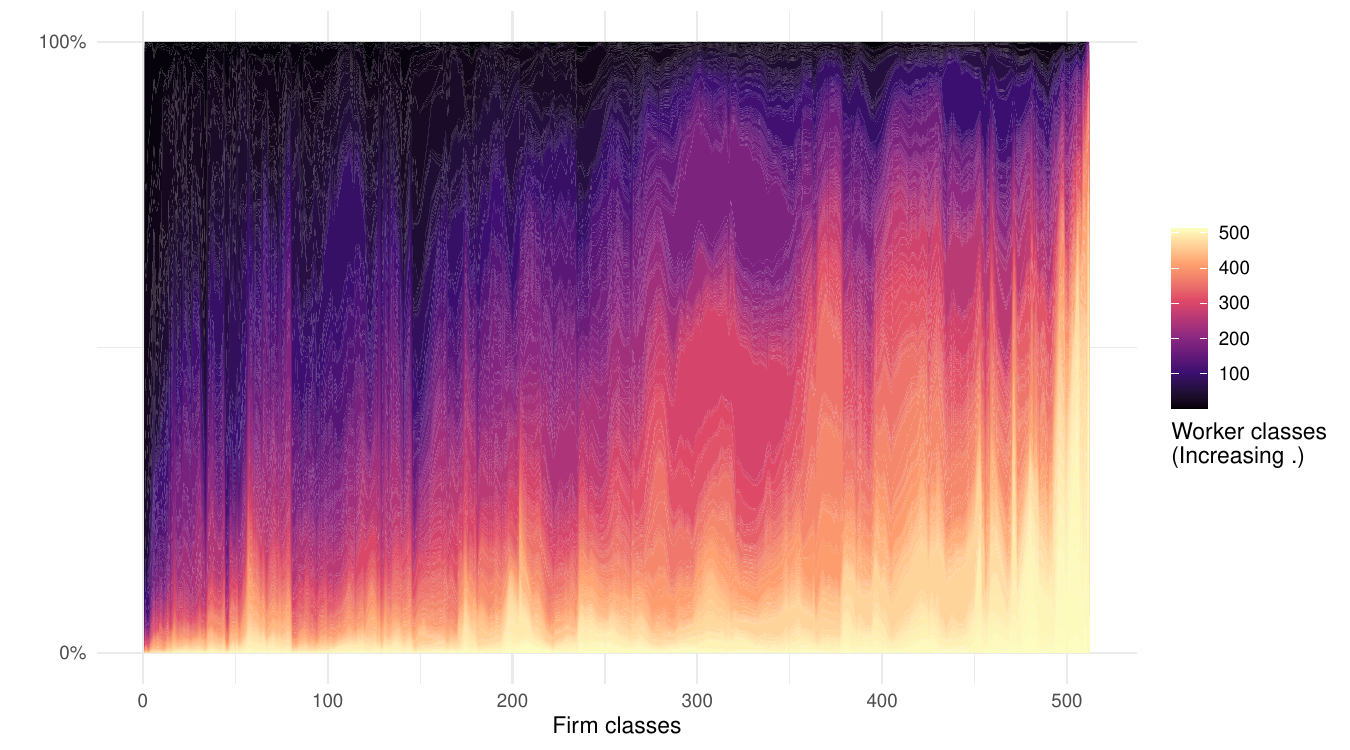}
\note[Note]{
Heatmap showing the joint distribution of worker and firm classes recovered by TWICE. 
The x-axis orders firm classes from lowest to highest average firm premium ($\psi_h$). 
The y-axis represents the composition of the workforce within that firm class. 
Colors represent worker types ordered by average worker effect ($\alpha_g$), from dark (low wage) to light (high wage). 
A diagonal pattern indicates positive assortative matching on observables: high-wage firm types disproportionately employ high-wage worker types.
}
\end{figure}

Figure~\ref{fig:sorting_plot} displays the composition of worker types across the ordered firm spectrum.
The pattern reveals clear evidence of positive assortative matching: low-wage firms (on the left) predominantly employ lower-type workers, while the share of higher-type workers increases smoothly toward the right end of the firm distribution.
This monotonic gradient implies that firm heterogeneity and worker heterogeneity are mutually reinforcing in wage determination—high-paying firms systematically attract and retain more productive workers.

Compared with the discrete-type models of \cite{bonhomme2019distributional}, our implementation uses a relatively fine grid of worker and firm classes.
Tree-based partitioning and LightGBM smoothing make such fine partitions computationally tractable, while the number of leaves is disciplined by out-of-sample risk minimization.
The resulting step-function approximation yields a finely graded, data-driven representation of the joint distribution of worker and firm heterogeneity: rather than a handful of latent classes, we obtain a dense ranking of observable types, where the composition of worker classes evolves gradually along the firm spectrum.
We interpret this as a quasi-continuous hierarchy of matches—an approximation to an underlying smooth distribution of wage premia—consistent with a positive but imperfect correlation between worker and firm effects.
 
\subsection{Partial dependence plots}\label{ss:pdp}

A central advantage of the TWICE framework is its ability to combine strong predictive performance with economically interpretable relationships.
Section~\ref{ss:oos} shows that, on a firm-level holdout sample, TWICE attains substantially higher test $R^2$ and lower test MSE than flexible OLS benchmarks.
This does not turn TWICE into a structural model, but it does suggest that the fitted wage surface captures systematic regularities in the data rather than sample-specific noise.
We use this out-of-sample performance as a credibility check when interpreting model-based summaries such as Partial Dependence Plots (PDPs).

Let $V_{it} = (X_{J(i,t)}, Z_{it})$ denote the full vector of firm and worker covariates for observation $(i,t)$, and let $P_V$ be the joint distribution of $V_{it}$.
For a subset of coordinates $S$ (for example, age or tenure), write $V_{S}$ for the corresponding subvector and $V_{-S}$ for the remaining coordinates.
Given the cross-fitted predictor $\widehat f$ from Section~\ref{ss:estimation}, the partial dependence function for $V_S$ is defined as
\[
    f_S(v_S)
    = \E[V_{-S}]{\widehat f(v_S, V_{-S})}
    = \int \widehat f(v_S, v_{-S}) \, dP_{V_{-S}}(v_{-S})
\]
where $P_{V_{-S}}$ is the marginal distribution of $V_{-S}$ under $P_V$.

The function $f_S(v_S)$ can be read as the average predicted log wage at feature value $v_S$, integrating over the empirical distribution of all other covariates.
It therefore provides a nonparametric, model-based \emph{ceteris paribus} profile: it traces how predicted wages vary with a given attribute while holding the rest of the feature vector at its observed distribution, without imposing a pre-specified functional form.
In what follows, we interpret these PDPs as descriptive summaries of the estimated conditional wage function, not as causal responses to changes in the underlying covariates.

In our empirical implementation, we report a \emph{reference-point} version that fixes non-focal covariates at representative values.
Let $\tilde v_{-S}$ collect the sample medians of non-focal continuous covariates and the sample modes of non-focal categorical covariates (computed on the estimation sample).
For any displayed subgroup $d$ (e.g. qualification $\times$ gender $\times$ education), let $D$ denote the set of subgroup indicators and define
\[
\tilde f^{S}(v_S;d) = \widehat f\of{v_S,d,\tilde v_{-(S\cup D)}}.
\]
This yields a ceteris paribus profile evaluated at a common baseline. 
Appendix \ref{app:pdps} provides full implementation details (grid construction, trimming, and cross-fitted averaging).
As throughout, these plots are descriptive summaries of the fitted conditional wage surface, not causal effects.

In the next subsections, we present PDPs for key determinants such as worker age, tenure, and firm size to highlight the nonlinear and interactive structure of wage formation uncovered by TWICE.

\paragraph{Age, gender, and education, by qualification}
Figure~\ref{fig:pdp_age_sex} presents PDPs of predicted log wages by age, conditional on the worker qualification classes identified in the TWICE clustering stage. 
Each panel corresponds to a qualification type—generic worker, specialized worker, and manager—and plots age–wage profiles separately by gender and across three education levels: at most primary, secondary, and at least a bachelor’s degree.

\begin{figure}[p]
\centering
\caption{PDP for age, gender, and education – by qualification}
\label{fig:pdp_age_sex}
 \subcaptionbox{Generic worker\label{fig:pdp_generic}}{\includegraphics[scale=0.59]{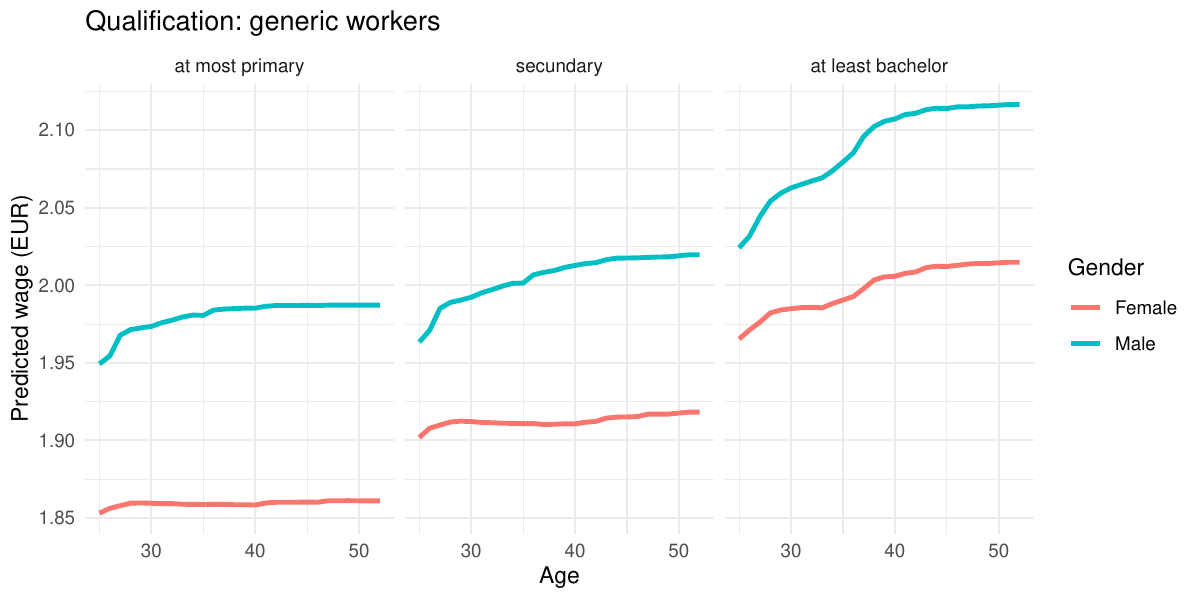}}\\
\subcaptionbox{Specialized worker\label{fig:pdp_specialized}}{\includegraphics[scale=0.59]{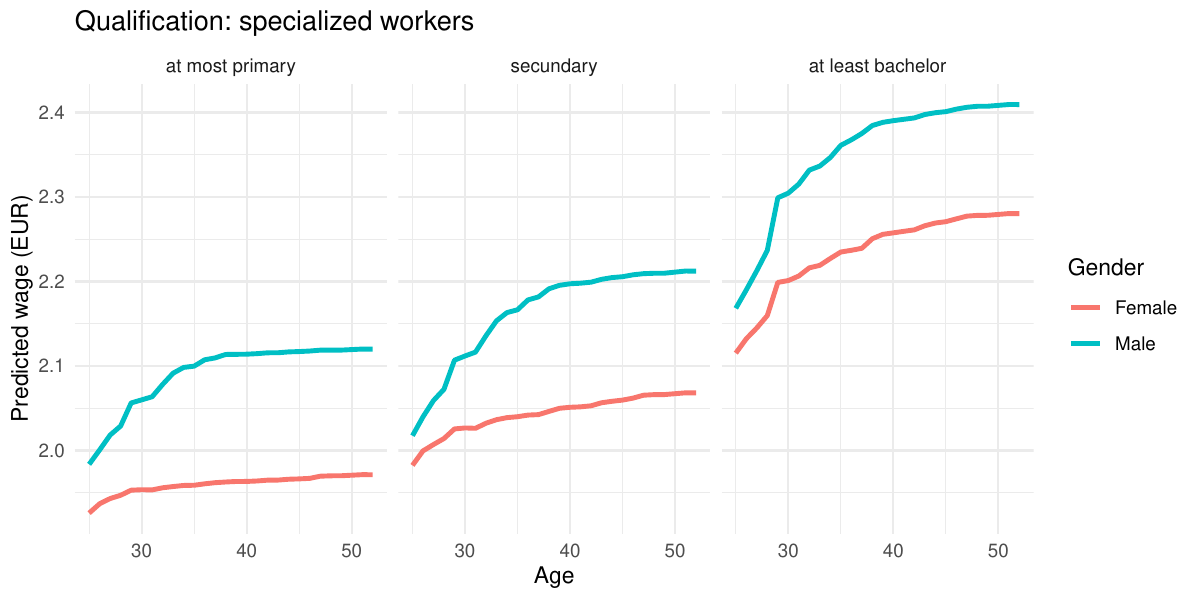}}\\
\subcaptionbox{Manager\label{fig:pdp_manager}}{\includegraphics[scale=0.59]{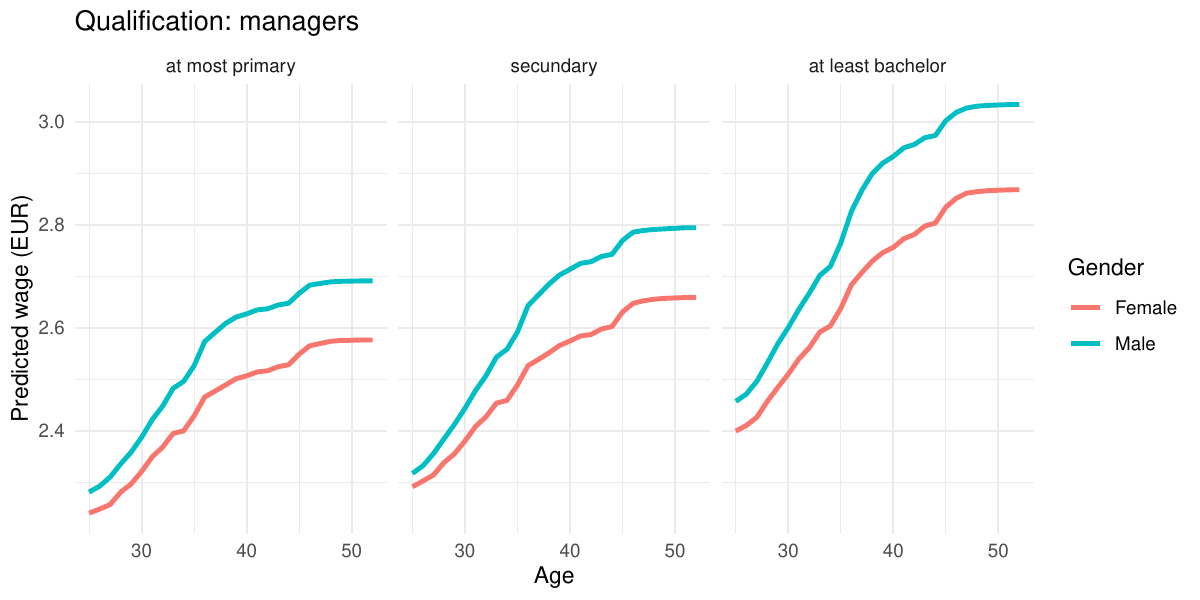}}

\note[Note]{
Partial Dependence Plots showing the predicted log hourly wage (y-axis) as a function of age (x-axis). 
Panels are stratified by the worker qualification level determined by the model (Generic, Specialized, Manager).
Lines are separated by gender (color) and education level (facets). 
Predictions represent the marginal effect of age, averaging over the empirical distribution of all other worker and firm characteristics in the model.
}
\end{figure}

Conditioning on qualification reveals substantial heterogeneity in wage dynamics that would be obscured in a pooled analysis. 

For \emph{generic workers} (Figure~\ref{fig:pdp_generic}), wage profiles are moderately concave, with returns to experience flattening after the mid-forties, consistent with standard human-capital accumulation \citep{mincer1974schooling, card2016bargaining}. 
Education premiums are visible and stable across the life cycle, particularly for tertiary education. 
The slope of the age–wage profile is similar for men and women, though a persistent level gap remains throughout the working life.

Among \emph{specialized workers} (Figure~\ref{fig:pdp_specialized}), the model captures steeper early-career wage growth, especially for degree holders. 
The gender wage gap widens noticeably after the late thirties, coinciding with the prime career progression phase and consistent with the literature on gender differences in promotion and job mobility \citep[e.g.][]{bertrand2013career}. 
For less-educated workers, the profiles are flatter and converge at lower wage levels, indicating more limited returns to experience.

For \emph{managers} (Figure~\ref{fig:pdp_manager}), age–wage profiles are markedly steeper and continue rising into the fifties, suggesting sustained returns to tenure and firm-specific human capital \citep{baker1994internal,lazear2018compensation}. 
The observed stepwise structure likely reflects the tree-based partitioning of TWICE, capturing discrete wage thresholds associated with promotions or responsibility levels—features that smooth parametric specifications would obscure. 
The divergence between men and women is largest in this group: while early-career trajectories overlap, male managers’ wages rise faster and for longer, producing the widest absolute gender gap at older ages and higher education levels.

\paragraph{Tenure, education, and qualification}
We next investigate the returns to firm-specific human capital by examining the relationship between tenure and predicted wages, as shown in Figure~\ref{fig:pdp_tenure_edu}.
The figure plots predicted log wages against months of tenure (on a logarithmic scale), conditioning on worker qualification and education level.
This allows us to assess how the value of firm experience differs across worker types—a central question in the literature on human capital and internal labor markets \citep{topel1991specific}.

\begin{figure}[t]
\centering
\includegraphics[width=0.9\textwidth]{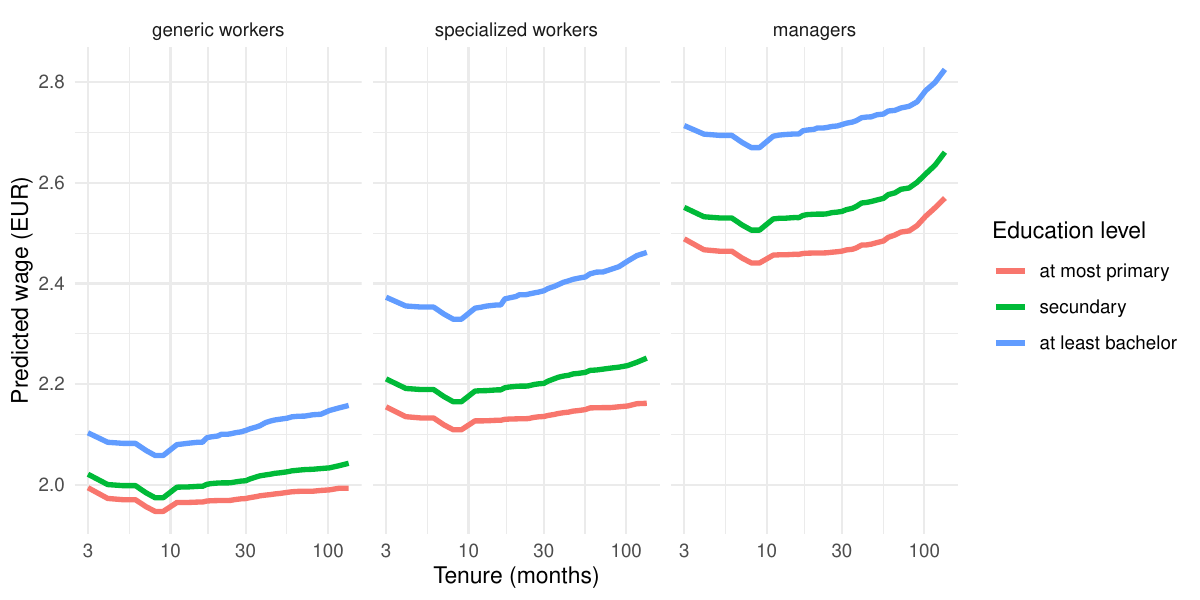}
\caption{PDP for tenure, education, and qualification}
\note[Note]{Partial Dependence Plots derived from the TWICE model. 
Each panel illustrates the marginal relationship between predicted log-wage (y-axis) and tenure in months (x-axis, log scale), conditional on education level (color) and worker qualification (facet). 
Predictions are averaged over the empirical distribution of all other worker and firm characteristics.}
\label{fig:pdp_tenure_edu}
\end{figure}

The results reveal substantial heterogeneity in the returns to tenure.
For \emph{generic workers}, wage profiles are nearly flat, suggesting limited firm-specific learning and a predominance of general, transferable skills.
For \emph{specialized workers}, the profiles become steeper, indicating that firm-specific skills and experience play a larger role in pay growth.
Among \emph{managers}, the relationship between tenure and predicted wages is clearly positive and concave, with steep early gains that taper gradually—consistent with models of internal promotion and long-term incentive contracts \citep{baker1994internal,lazear2018compensation}.

A notable feature across all qualification groups is a short initial dip in predicted wages for workers with less than one year of tenure.
A simple probationary-wage story could account for lower initial pay, but the observed \emph{decline} within the first few months is unlikely to be causal.
It likely reflects a compositional effect specific to Portugal’s dual labor market, where a large share of new hires enter on temporary or subsidized traineeship contracts (\emph{estágios}) with below-market pay \citep{nunes2023failing}.
These workers disproportionately populate the lowest-tenure segment, temporarily pulling down average wages at the 5–10 month mark.
As such contracts expire or convert to permanent positions, the wage profile rises sharply thereafter.
From a theoretical standpoint, the same pattern may also reflect compensating differentials: workers accept lower initial pay in exchange for firm-provided training \citep{gregory2023firms} or future mobility opportunities \citep{del2024importance}.
Both mechanisms are consistent with a positive long-run slope of tenure returns once the early probationary phase is over.

\paragraph{Firm size, education, and qualification}
We conclude by revisiting the firm-size wage premium, a canonical finding in empirical labor economics. 
Figure~\ref{fig:pdp_size_edu} plots predicted log wages against firm size (log scale), conditional on worker qualification and education.
This suggests that the conventional firm-size wage premium largely reflects compositional differences and productivity differentials: larger firms employ more educated workers and generate higher revenue per worker, but once these attributes are accounted for, the TWICE model shows no residual size-related wage advantage.

Crucially, this result is not an artifact of collinearity between firm size and total revenue. 
Because size, revenue, and productivity (revenue per worker) are mechanically linked, care must be taken in defining the \emph{ceteris paribus} condition. 
When generating these plots, we hold revenue per worker (productivity) fixed at its sample median while varying firm size. 
Consequently, the flat profiles in Figure \ref{fig:pdp_size_edu} shows that firm size has no predictive power independent of productivity and worker composition: expanding the workforce without a corresponding increase in productivity yields no wage premium in our model.

\begin{figure}[t]
\centering
\includegraphics[width=0.9\textwidth]{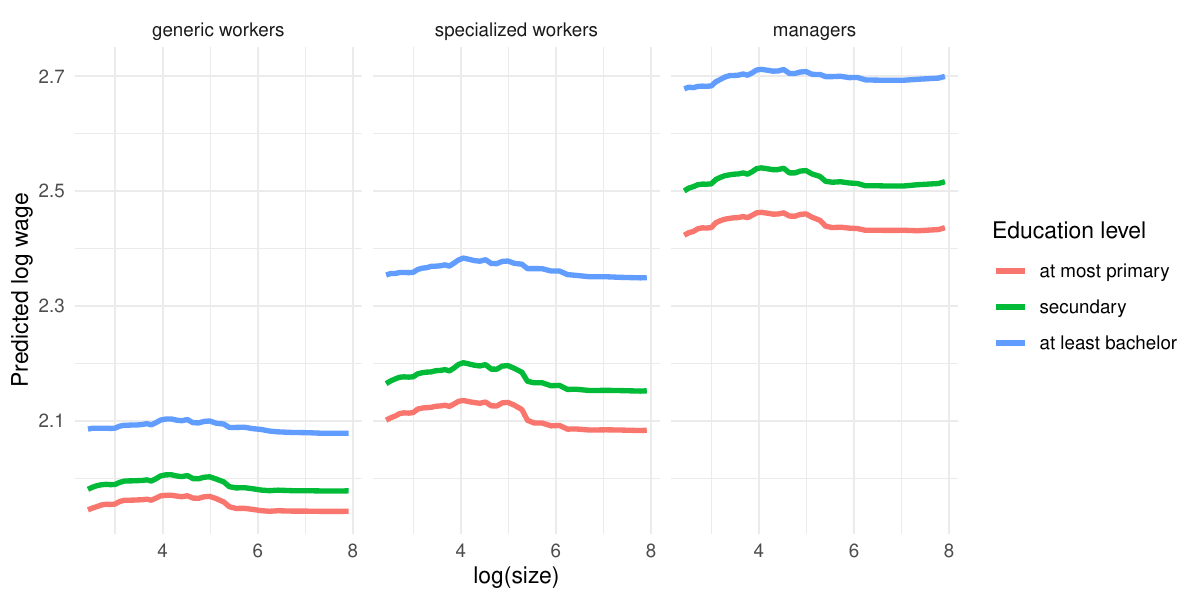}
\caption{PDP for firm size, education, and qualification}
\note[Note]{
Partial Dependence Plots showing the predicted log hourly wage as a function of firm size (log number of workers).
Panels are stratified by worker qualification. 
Lines represent education levels. 
The steepness of the curve indicates the return to firm-specific size, holding all other factors (including firm per-worker revenue) constant at their sample averages. 
The flat slopes indicate that, once productivity and worker composition are held constant, firm size itself has negligible predictive power for wages.
}
\label{fig:pdp_size_edu}
\end{figure}

Across all qualification groups, predicted wages remain essentially flat—or even decline slightly—as firm size increases. 
Education and qualification continue to strongly stratify wage levels, but the slope of the firm-size profile is close to zero throughout.
This suggests that the conventional firm-size wage premium largely reflects compositional differences: larger firms employ more educated and higher-qualified workers, but once these attributes and other firm observables (e.g., productivity, sector, and capital intensity) are accounted for, firm size itself carries little independent predictive power. 
For managers, the mild decline at the largest firm sizes may reflect pay compression in highly structured corporate environments, where hierarchical distance and centralized wage policies limit individual rent sharing \citep{card2018firms}. 
Thus, the TWICE model provides a more nuanced interpretation of the “firm-size premium”: it is not a direct causal effect of size but an emergent outcome of sorting and correlated firm characteristics already captured by the model’s flexible structure.

\subsection{Accumulated Local Effects}\label{ss:ale}

PDPs provide a useful summary of average model behavior, but they can be misleading when predictors are strongly correlated.
By construction, a PDP for a feature $X_j$ averages $\widehat f$ over the \emph{marginal} distribution of the remaining covariates $X_{-j}$, evaluating the model at combinations $(x_j, x_{-j})$ that may be rare or never observed in the data.
When features are correlated (e.g., age and tenure), this implies extrapolating the model off the empirical support, so the PDP need not coincide with the more natural target $\E{Y \mid X_j = x_j}$.

To address this limitation, we complement the PDPs with \emph{Accumulated Local Effects} (ALE) plots, which summarize the local influence of a feature on the prediction while respecting the joint distribution of the covariates.
For a continuous feature $X_j$, let $X = (X_j, X_{-j})$ and $\widehat f$ denote the fitted prediction function.
The first-order ALE function is defined as
\[ \mathrm{ALE}_j(x)
= \int_{x_j^{(0)}}^x 
\E[X_{-j} \mid X_j = z]{\frac{\partial \widehat f\of{X_j, X_{-j}}}{\partial x_j} \mid X_j = z}
\, dz
\]
where $x_j^{(0)}$ is the lower bound of the observed support of $X_j$.
Intuitively, $\mathrm{ALE}_j(x)$ accumulates the average \emph{local marginal effect} of $X_j$ on predicted wages at $X_j$ moves from $x_j^{(0)}$ to $x$, averaging over the conditional distribution of the remaining covariates. 
In practice, we approximate the derivative by finite differences and the conditional expectation by within-bin averages, so the resulting curve is defined only where the data are dense.

ALE plots thus quantify how small local changes in a feature shift predicted log-wages on average, accumulated along the empirical distribution of that variable.
Unlike PDPs, ALEs do not extrapolate beyond the data and remain robust under strong feature correlations—a common feature of employer–employee datasets where variables such as firm size, productivity, and capital intensity are highly collinear.

All ALEs are computed using the final TWICE LightGBM model described in Section~\ref{ss:implementation}, with the same cross-fitted structure used in the PDP analysis.
Together, the PDPs and ALEs provide complementary perspectives: the former highlight average global relationships, while the latter trace local, data-supported sensitivities of predicted wages to key covariates.

\paragraph{ALEs for tenure, productivity, and age}
Figure~\ref{fig:ale_main} reports Accumulated Local Effects (ALEs) for three covariates: tenure, revenue per worker, and age. 
ALEs are mean–centered partial effects, so each curve is interpreted as the deviation (in log points) of predicted wages from their average as the focal variable varies locally over its observed support.

\begin{figure}[t]
\subcaptionbox{Tenure (months, log scale)\label{fig:ale_tenure}}{\includegraphics[width=0.32\textwidth]{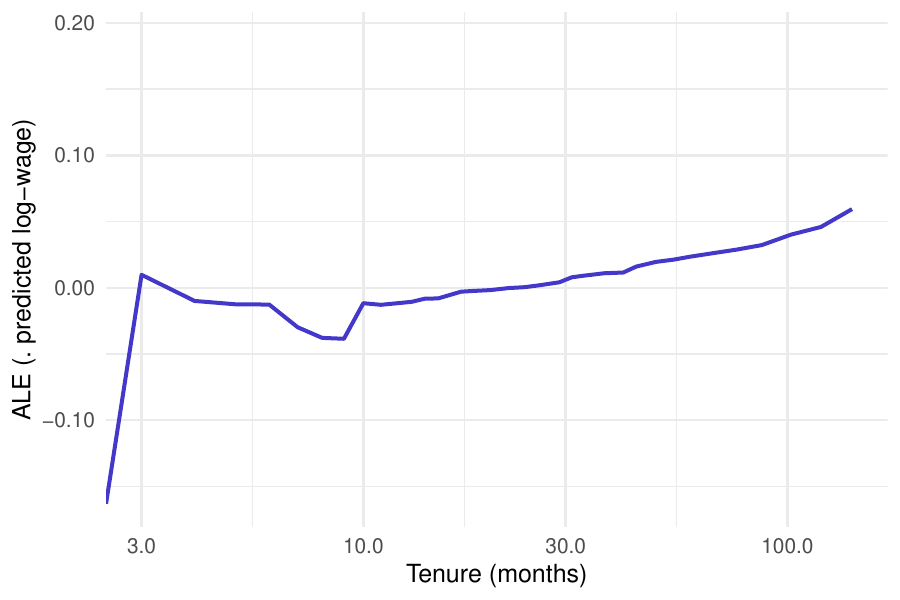}}\hfill
\subcaptionbox{Log(revenue per worker)\label{fig:ale_logrpw}}{\includegraphics[width=0.32\textwidth]{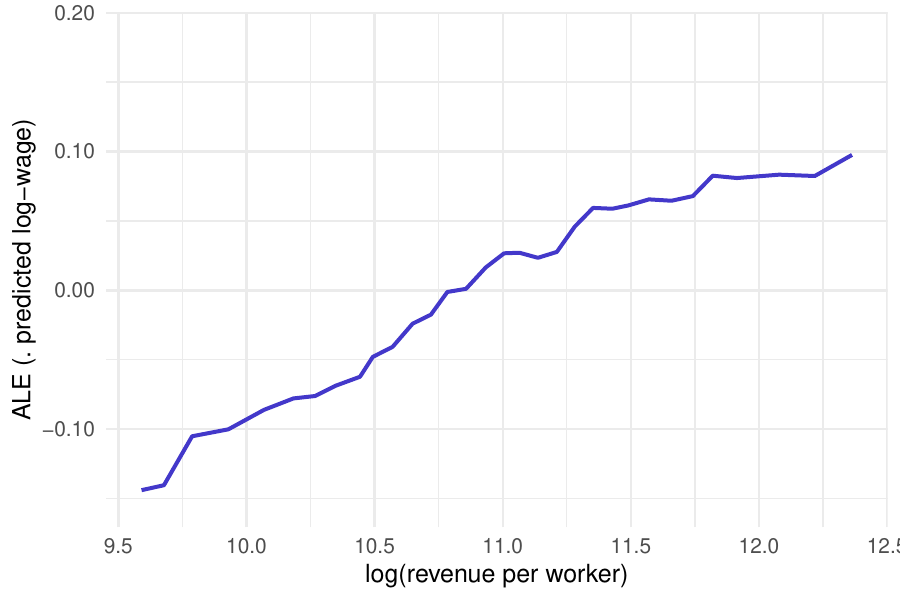}}\hfill
\subcaptionbox{Age (years)\label{fig:ale_age}}{\includegraphics[width=0.32\textwidth]{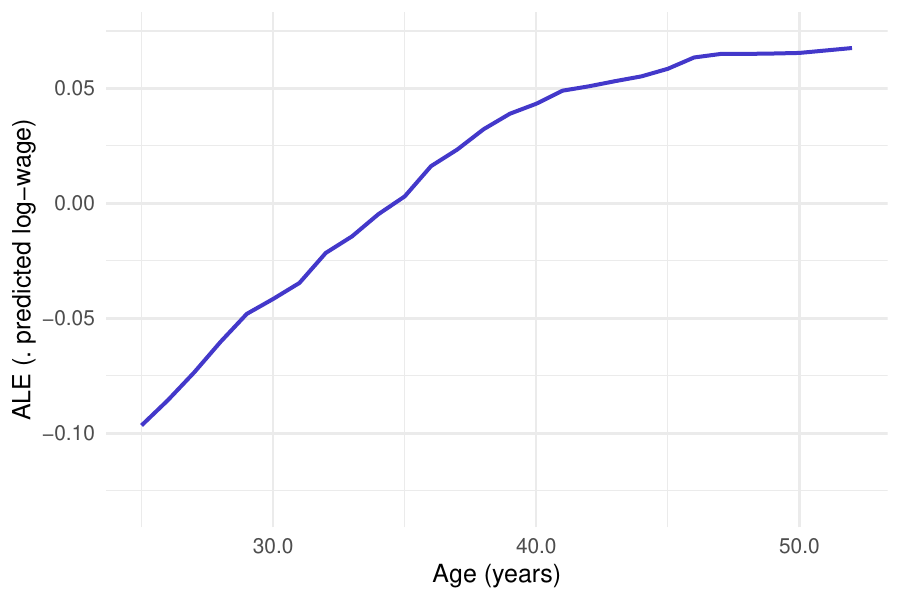}}
\caption{Accumulated Local Effects (ALEs) for tenure, productivity, and age}
\note[Note]{
Accumulated Local Effects (ALE) plots for three continuous covariates. 
Unlike PDPs, ALEs calculate marginal effects locally (using data within a neighborhood) to avoid evaluating the model on unrealistic combinations of correlated covariates (e.g., high tenure with young age). 
The y-axis represents the accumulated change in log wages relative to the average prediction.
}
\label{fig:ale_main}
\end{figure}

Panel~\ref{fig:ale_tenure} shows a modest dip at very short tenures (on the order of a few months) followed by a steady rise that becomes pronounced beyond roughly one year. 
The early trough is consistent with institutional features of the Portuguese labor market—e.g., the prevalence of fixed-term traineeships or probationary arrangements at below-standard pay—after which wages increase with firm-specific experience. 
By five to ten years of tenure the cumulative ALE is clearly positive, indicating meaningful firm-specific returns.

Panel~\ref{fig:ale_logrpw} displays a monotone, concave profile in log(revenue per worker): higher productivity per worker is associated with higher predicted wages, with the marginal effect tapering at the very top. 
This is consistent with rent-sharing or efficiency-wage mechanisms and provides a clean benchmark for firm “quality” that is less entangled with size \emph{per se}.

Panel~\ref{fig:ale_age} recovers a familiar concave age–earnings profile: 
the ALE rises steeply through the late 20s and 30s, flattens in the 40s, and levels off thereafter. The saturation at older ages is consistent with standard human-capital and internal-labor-market models.

Taken together, these ALEs complement the PDP evidence. 
The tenure and age profiles validate classic human-capital patterns in a way that is robust to feature correlation, while the strong positive gradient in revenue per worker clarifies that the negative slope observed in the firm-size PDPs is not a general property of “firm quality” but reflects conditioning choices and the joint distribution of size and productivity in the data.

\subsection{Variance decomposition}\label{ss:decomposition}

Section~\ref{ss:population_estimands} introduced the population-level variance decomposition implied by the TWICE framework.
TWICE partitions workers and firms using observable characteristics, estimates the corresponding cell means $\mu_{\ell k}$, and decomposes $\var{Y}$ into worker, firm, sorting, interaction, and residual components via additive projection.
The additive AKM model and the static BLM model deliver analogous decompositions, but differ in how worker and firm heterogeneity are represented. 
AKM relies on latent individual and firm fixed effects, while BLM estimates latent worker and firm classes and allows unrestricted dependence between them.

We now compare how the three frameworks allocate wage dispersion across these components.

\paragraph{Results}
Table~\ref{tab:joint_deco} reports the variance decomposition results for the AKM, BLM (static), and TWICE models side by side. 
All three refer to the same wage variable and sample, yielding a nearly identical total variance of approximately 0.17.\footnote{%
AKM variance is computed on the subsample that keeps only observations with both worker and firm FE estimates.
This explain the tiny discrepancies between AKM variance and the others.}

\begin{table}[t]
\caption{Variance decomposition of log wages under AKM, BLM (static), and TWICE}
\label{tab:joint_deco}
\centering
{
\renewcommand{\arraystretch}{1.2}
\begin{tabular*}{\textwidth}{p{3.9cm}@{\extracolsep\fill}cccccc}
\toprule
& \multicolumn{2}{c}{\textbf{AKM}} & \multicolumn{2}{c}{\textbf{BLM (static)}} & \multicolumn{2}{c}{\textbf{TWICE}} \\
\cmidrule(lr){2-3}\cmidrule(lr){4-5}\cmidrule(lr){6-7}
\textbf{Component} & \textbf{Var.} & \textbf{Share} & \textbf{Var.} & \textbf{Share} & \textbf{Var.} & \textbf{Share} \\
\midrule
Worker effect / group & 0.103 & 0.592 & 0.086 & 0.500 & 0.048 & 0.276 \\
Firm effect / group & 0.034 & 0.195 & 0.009 & 0.052 & 0.015 & 0.087 \\
Sorting $2\mathrm{Cov}(\alpha,\psi)$ & 0.013 & 0.072 & 0.034 & 0.195 & 0.020 & 0.116 \\
Interaction $\mathrm{Var}(\kappa_{gh})$ & --- & --- & --- & --- & 0.013 & 0.073 \\
Within-group variance & 0.022 & 0.124 & 0.044 & 0.254 & 0.077 & 0.448 \\
\addlinespace
\textit{Total variance} $\mathrm{Var}(Y)$ & 0.174 & 1.000 & 0.172 & 1.000 & 0.172 & 1.000 \\
\bottomrule
\end{tabular*}
}

\note[Note]{
Variance decomposition of log hourly wages estimated on the largest connected set. 
\emph{AKM}: standard two-way fixed-effects model with individual worker and firm effects.
\emph{BLM (static)}: latent-class model of \cite{bonhomme2019distributional} with $K=10$ firm classes and $L=6$ latent worker types estimated via EM.
\emph{TWICE}: observable-anchored partitions with $K=512$ firm classes and $L=512$ worker classes, selected by minimizing blocked out-of-sample prediction error (Table \ref{tab:grid_KL_loss}). 
Worker and firm effects are obtained via additive projection (weighted two-way ANOVA on cell means), ensuring orthogonality between marginal effects and the interaction term.
The interaction term $\var{\kappa_{gh}}$ captures non-additive complementarities unique to specific worker-firm cell combinations.
Within-group variance includes unobserved heterogeneity, idiosyncratic match effects, and measurement error.
}
\end{table}

In the standard AKM specification, worker fixed effects account for about 60\% of total variance, firm effects for 20\%, and sorting for 7\%. 
The static BLM model reallocates variance across components once correlated heterogeneity is allowed: the worker share declines to 50\%, the pure firm share to about 5\%, and sorting increases to roughly 20\%.

The TWICE decomposition yields notably different shares: worker groups account for 28\% of variance, firm groups for 8.7\%, sorting for 11.6\%, and interactions for 7.3\%.
The within-group residual is 45\%, larger than in AKM (12\%) or BLM (25\%).

\paragraph{Interpreting the differences}
The lower worker and firm shares in TWICE relative to AKM and BLM reflect a fundamental difference in what the groups capture.
AKM uses individual fixed effects—each worker is their own “group”—so worker heterogeneity mechanically absorbs all persistent individual-level variation.
BLM estimates latent worker types via maximum likelihood, optimizing the type assignments to explain wage variation.
Although BLM uses few types, these are endogenously chosen to maximize explanatory power.
TWICE groups workers by observable characteristics, which are interpretable but imperfect proxies for latent ability—two workers in the same TWICE cell may have very different true productivities.

TWICE's firm share (8.7\%) is comparable to BLM's (5.2\%), suggesting that observable firm characteristics capture a similar amount of systematic firm-level variation as BLM's latent classes.
Both are well below AKM's 19.5\%, which includes firm-specific noise that does not generalize out of sample.

\paragraph{The interaction term}
The TWICE interaction component—$\var{\kappa_{gh}}$, accounting for 7.3\% of variance—captures non-additive complementarities between observable worker and firm types.
This share is modest but indicates that some worker-firm combinations yield wages that deviate systematically from what additive worker and firm effects would predict.

The small interaction share is consistent with findings in \cite{bonhomme2019distributional}, who document “strong sorting and weak complementarities” in Swedish data.
In their reallocation exercise, randomly reassigning workers across firms changes mean log-wages by only 0.5\%, and adding interactions to their model improves $R^2$ by less than one percentage point.
Our results suggest a similar pattern holds for Portugal: workers sort substantially on observables, but conditional on this sorting, the wage function is approximately additive.

\paragraph{The within-group residual}
TWICE's within-group variance (44.8\%) is larger than in AKM or BLM because observable characteristics provide coarser descriptions of worker and firm heterogeneity than individual fixed effects or optimized latent types.
This residual includes: 
(i) unobserved worker ability within demographic cells, 
(ii) unobserved firm heterogeneity within balance-sheet cells, 
(iii) idiosyncratic match effects, and 
(iv) measurement error.

The 44.8\% within-cell share should not be interpreted as “unexplained” variance in the sense of model failure.
Rather, it reflects the deliberate trade-off in TWICE between interpretability (groups defined by observables) and explanatory power (variance captured).
The between-cell component represents the share of wage dispersion attributable to observable worker and firm characteristics and their interaction—a quantity directly relevant for policy analysis.
The fact that the cross-fitted model achieves high out-of-sample accuracy ($R^2 \approx 0.50$, Table~\ref{tab:oos}) confirms that observables explain a substantial share of systematic wage variation.

\paragraph{Contrasts across models}
Two contrasts emerge. 

First, relative to AKM, both BLM and TWICE attribute a smaller share of dispersion to pure worker heterogeneity: 60\% in AKM, 50\% in BLM, and 28\% in TWICE. 
For BLM, this reflects the reallocation of variance to sorting once correlated heterogeneity is modeled.
For TWICE, it additionally reflects that observable characteristics are imperfect proxies for latent ability.

Second, the role of sorting increases as the model allows for richer dependence between worker and firm heterogeneity. 
Sorting rises from 6\% in AKM to 20\% in BLM and to 12\% in TWICE.
The TWICE sorting component should be interpreted as assortative matching on observables: because worker and firm types are functions of $(Z,X)$, any systematic association between worker and firm characteristics that affects wages is reflected in $\cov{\alpha_g, \psi_h}$.
The lower sorting share in TWICE relative to BLM (12\% vs. 20\%) likely reflects that BLM's latent types capture unobserved dimensions of sorting that do not project onto our observable covariates.

\paragraph{What TWICE adds}
Relative to AKM and BLM, the main contribution of TWICE is not that it reverses their qualitative message—our results are broadly consistent with \cite{bonhomme2019distributional} and \cite{bonhomme2023much}: pure firm effects are modest once sorting is modeled flexibly, and complementarities are weak.
What TWICE adds is:

\begin{enumerate}
    \item \emph{Anchoring heterogeneity in observables.}
    Worker and firm premia are functions of rich, interpretable covariates rather than purely latent effects.
    This lets us link variance components, sorting, and interactions directly to balance-sheet variables and workforce composition, an information set that the existing literature largely treats as residual controls.
    
    \item \emph{A clean interaction component.}
    By projecting cell means onto additive worker and firm effects, TWICE isolates a well-defined interaction term $\kappa^*_{gh}$ that is orthogonal to both margins. 
    The interaction share (7.3\%) quantifies non-additive complementarities between observable worker and firm types—a quantity that AKM cannot estimate by construction. 
    Our finding of small but nonzero interactions is consistent with BLM's conclusion of “weak complementarities” in matched employer-employee data.
    
    \item \emph{Out-of-sample, policy-relevant structure.}
    Because the decomposition is built from a predictor that generalizes well to held-out firms, the components can be used to speak about new firms and workers with given observables (Section~\ref{ss:oos}), rather than only about units in the mobility graph.
    This is useful for counterfactual or policy exercises that shift the distribution of observables (e.g., productivity or education) rather than the latent effects.
\end{enumerate}

\medskip
Viewed this way, TWICE sharpens and reinterprets the existing variance-decomposition results: it confirms that sorting is quantitatively important, shows that a meaningful share of dispersion comes from observable interactions between worker and firm traits, and maps both patterns into rich administrative observables that previous approaches leave largely unused.

\paragraph{Robustness}
As a robustness check, we examine whether the construction of the firm partition depends on the choice of the firm-level target used in the first-stage tree. 
Appendix \ref{app:robustness_firm} reports results obtained when the firm tree is retrained using 
(i) median wages and 
(ii) firm-level averages of cross-fitted residuals from a worker-only model. 
The resulting variance decompositions are very similar to the baseline, and the predicted firm components exhibit a very high correlation with the baseline specification. 
This confirms that our main findings are not driven by the specific choice of target used to form the firm types.

\subsection{Concordance with AKM fixed effects}

A natural question is whether the observable-anchored TWICE partitions capture the same worker and firm heterogeneity that AKM identifies through individual fixed effects.
To assess this, we compute $\eta^2$: the share of cross-sectional variance in AKM fixed effects explained by TWICE class membership.
For workers, this asks: how much of the variation in individual AKM effects $\widehat{\alpha}_i$ is accounted for by knowing which TWICE cell a worker belongs to?
For firms, we weight by the number of worker-year observations, so that $\eta^2$ reflects the explanatory power for the typical worker rather than the typical firm.

\begin{figure}[t!]
\caption{Distribution of AKM effects across TWICE classes}
\label{fig:akm_effects_by_class}
\subcaptionbox{Distribution of AKM worker effects across TWICE worker classes\label{fig:akm_workerFE_by_class}}{\includegraphics[width=0.8\textwidth]{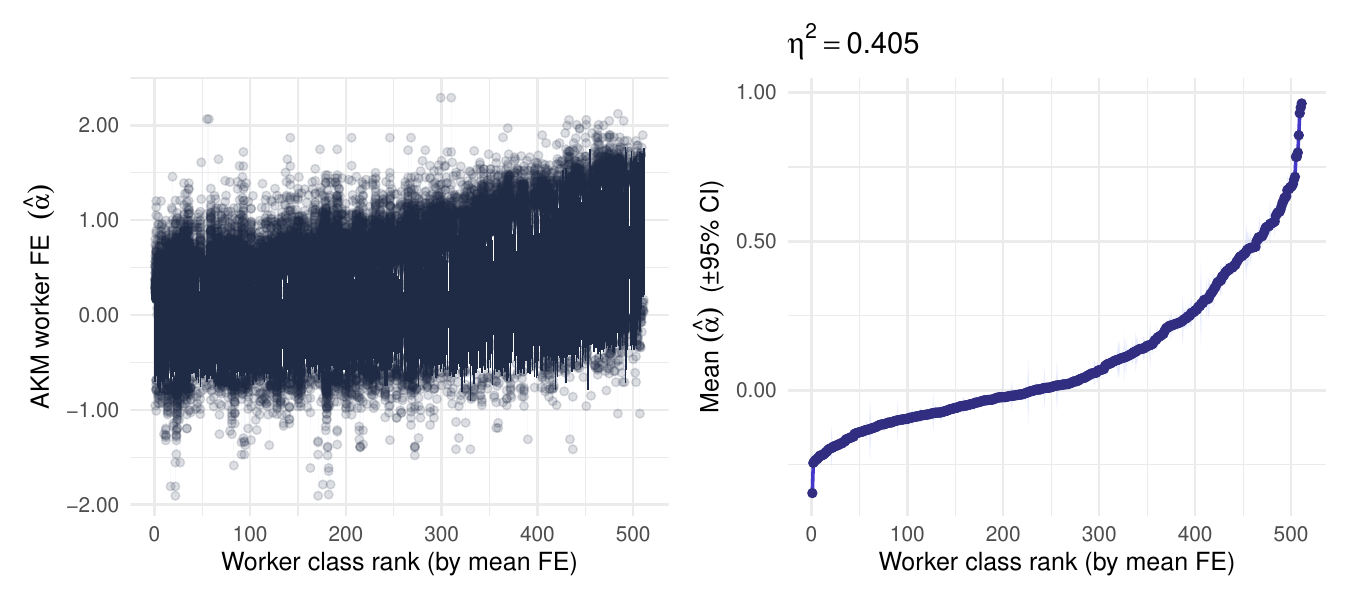}}\\
\subcaptionbox{Distribution of AKM firm effects across TWICE firm classes\label{fig:akm_firmFE_by_class}}{\includegraphics[width=0.8\textwidth]{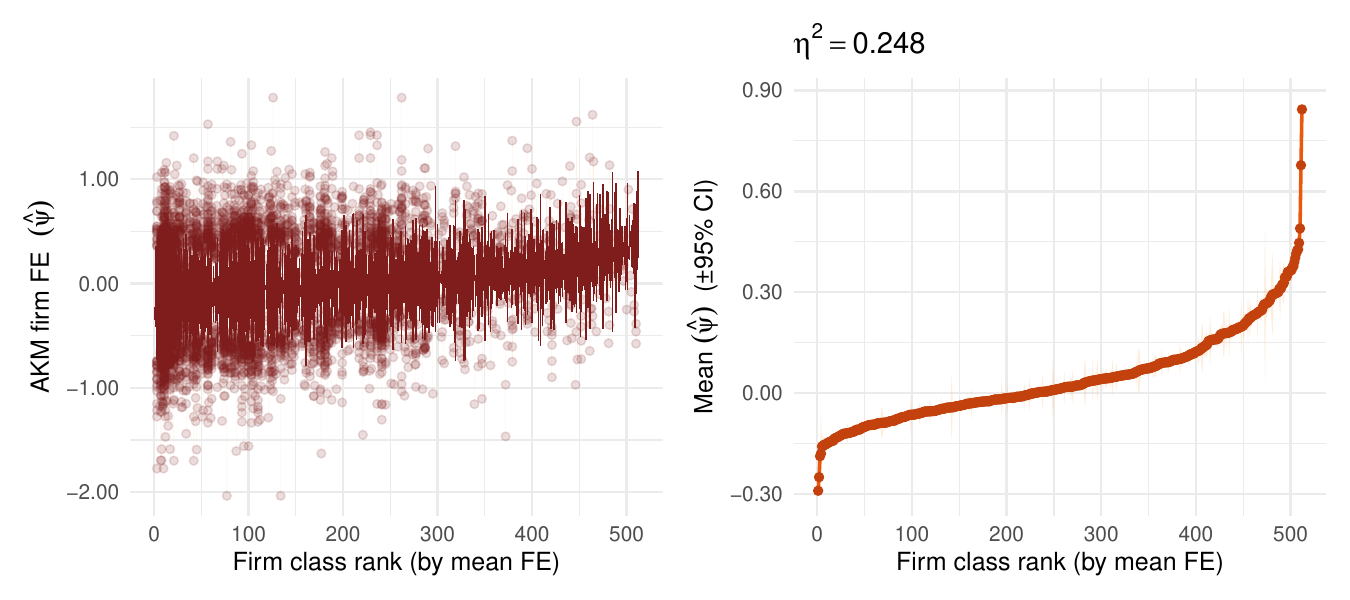}}
\note{
Concordance between observable-anchored TWICE partitions and latent AKM fixed effects.
Panel~A: Distribution of estimated AKM worker fixed effects ($\widehat{\alpha}_i$) within each TWICE worker class, ordered by class mean.
Panel~B: Distribution of estimated AKM firm fixed effects ($\widehat{\psi}_j$) within each TWICE firm class, ordered by class mean.
The red line connects class means. 
$\eta^2$ reports the share of variance in AKM fixed effects explained by TWICE class membership.
For workers, each individual receives equal weight ($\eta^2 = 0.40$).
For firms, each firm is weighted by its number of worker-year observations ($\eta^2 \simeq 0.25$), so the statistic reflects explanatory power for the typical worker.
}
\end{figure}

Figures~\ref{fig:akm_workerFE_by_class} and~\ref{fig:akm_firmFE_by_class} display the results. 
TWICE worker classes explain 40\% of the variance in AKM worker effects, indicating that observable characteristics—age, tenure, education, and occupation—capture a substantial share of persistent worker heterogeneity.
The remaining 60\% reflects unobserved ability differences within demographic cells: two workers with identical observables may have very different productivities.

On the firm side, TWICE classes explain around 25\% of the observation-weighted variance in AKM firm effects.
While lower than for workers, this concordance must be interpreted in light of the substantial sampling noise in AKM firm estimates.
In sparse mobility networks, a large fraction of the variance in $\widehat{\psi}_j$ reflects estimation error rather than true firm wage premia \citep{kline2020leave, bonhomme2023much}.
The 25\% captured by TWICE likely understates the share of \emph{true} firm heterogeneity explained by observables, since the denominator includes noise that no observable could predict.

This interpretation is supported by TWICE's out-of-sample performance.
Despite explaining only 25\% of the variation in $\widehat{\psi}_j$, TWICE achieves substantially higher predictive accuracy for wages than flexible OLS benchmarks (Table~\ref{tab:oos}).
The variation in AKM firm effects that TWICE misses appears to consist largely of sampling noise and idiosyncratic match effects that do not generalize, whereas TWICE captures the systematic component of firm wage policies anchored in balance-sheet observables.

Overall, the comparison confirms that TWICE partitions recover the predictable, observable-driven component of worker and firm heterogeneity.
The partitions provide interpretable groups that align with the persistent structure in wages without relying on mobility-identified latent effects.

\section{Conclusions}\label{s:conclusions}

This paper develops TWICE, a framework that integrates machine learning with the two-way decomposition tradition in matched employer--employee data. 
Rather than relying on latent worker and firm effects identified through mobility, TWICE models the conditional expectation of wages given rich observable characteristics, constructs observable-anchored worker and firm partitions, and summarizes wage dispersion through a decomposition into worker components, firm components, assortative matching, and non-additive interactions.

Applied to Portuguese administrative data, TWICE delivers high out-of-sample predictive accuracy, indicating that the estimated conditional wage function captures stable features of the wage structure rather than overfitting noise. 
The resulting variance decomposition reveals that assortative matching on observables accounts for 10\% of wage variance—compared to 6\% in standard AKM on the same data—while non-additive worker--firm interactions contribute an additional 5\%, a component that additive AKM models cannot capture by construction. 
These findings complement recent evidence showing that conventional fixed-effects estimators understate sorting and impose overly restrictive additive structure.

A key advantage of TWICE is its interpretability. 
Observable-anchored partitions allow us to characterize worker and firm heterogeneity in economically meaningful terms, and Partial Dependence and Accumulated Local Effects plots provide transparent diagnostics of how wages vary with specific attributes.
Because partitions are defined by observables rather than sample identifiers, the framework extends naturally to new workers and firms—enabling out-of-sample prediction and policy-relevant counterfactuals that latent-class or fixed-effect methods cannot readily support.

TWICE trades the ability to capture purely idiosyncratic unobserved heterogeneity for robustness to limited mobility and sampling noise. 
The framework isolates the component of firm pay policies predictable from observables, relegating idiosyncratic variation to the residual. 
Although we do not provide formal inference for the variance shares, extending recent results on inference under multiway clustering to this setting is a promising direction for future work.

Overall, TWICE provides a flexible, transparent, and out-of-sample-valid approach to analyzing wage dispersion in matched data. 
The general strategy—using supervised learning to form interpretable partitions and combining them with cross-fitted prediction—may prove useful in other settings with two-sided heterogeneity, such as credit markets, health care, or education. 
Integrating TWICE with causal methods to study policy counterfactuals or structural complementarities is a natural avenue for future research.

\bibliography{\bib}
\appendix


\section{A brief review of the existing methods}\label{a:AKM_BLM}
The analysis of wage formation with two-way heterogeneity has evolved through different approaches.
Here, we focus on the standard \cite{abowd1999high} (AKM) model and the recent refinement by \cite{bonhomme2019distributional} (BLM).

\subsection[The AKM model]{The \citetalias{abowd1999high} model}
\cite{abowd1999high} introduced a fixed effects wage model that can be expressed as:
\[ Y_{it} = \alpha_i + \psi_{j(i,t)} + X_{it}'\beta + \varepsilon_{it} \]
where $Y_{it}$ denotes worker $i$'s log-wage in year $t$, $j(i,t) \in \{1, \dots, J \}$ identifies the firm employing worker $i$ at $t$, $X_{it}$ includes control variables—often an age polynomial and time effects.\footnote{Following \citet{card2018firms}, implementations typically use an age polynomial normalized to be flat at a reference age and exclude the linear term to avoid collinearity with worker and year fixed effects.}

Identification relies on two assumptions. 
First, \emph{exogenous mobility}: conditional on observables, worker moves are orthogonal to the idiosyncratic error $\varepsilon_{it}$.
This implies that, absent firm effect differences, job changes would not systematically affect wages. 
Second, \emph{additivity}: worker and firm components enter additively, ruling out complementarities between $\alpha_i$ and $\psi_{j(i,t)}$.
\citetalias{abowd1999high} effects are identified only on the largest \emph{connected set} of the employer–employee mobility graph \citep{abowd2002computing}.
Intuitively, when two firm groups are not linked by flows, their relative levels are not pinned down.

\subsection{The BLM model}
\citet{bonhomme2019distributional} (BLM) address the dimensionality and additivity of \citetalias{abowd1999high} by grouping firms into a finite number of latent classes and modeling worker heterogeneity using a finite mixture of unobserved types. 
This allows for rich, unrestricted interactions between worker and firm heterogeneity.

BLM proposes a two-stage approach to overcome the computational challenges of simultaneous estimation. 
First, they use k-means clustering on firm-level empirical earnings distributions to assign each firm $j$ to one of $K$ classes. 
Second, conditional on these class assignments, they estimate a finite mixture model where the distribution of earnings depends on the interaction between a worker's latent type and their employer's class. 
For example, in their application, the mean and variance of log-earnings are allowed to differ for each worker type-firm class pair.

They estimate this second stage using an Expectation-Maximization (EM) algorithm to handle the latent nature of the worker types. 
This approach parsimoniously models worker heterogeneity while allowing its effect on earnings to be moderated by the firm class, capturing potential complementarities.

The BLM framework overcomes two key limitations of \citetalias{abowd1999high}: it relaxes the additivity assumption to allow for unrestricted interactions and, by grouping firms, it increases the density of the mobility network, which helps address the “limited mobility bias” documented by \cite{andrews2008high} and \cite{bonhomme2023much}.

\subsection{Limitations and motivation for TWICE}
The AKM and BLM frameworks have established foundational insights into wage dispersion. 
AKM identified firm heterogeneity as a first-order component, while BLM refined this by showing that sorting is paramount and that pure firm effects, net of workforce composition, are modest. 
Both show that parsimonious representations are crucial when mobility is sparse.

For our goals, three limitations remain. 
First, both frameworks treat the main objects as latent and typically relegate observables to residual controls: they are not designed to map balance-sheet and workforce characteristics directly into wage premia or to generalize those premia to new firms and workers. 
Second, forming firm classes from earnings distributions can blur wage-setting policies with workforce composition, making it harder to separate the two.
Third, portability outside the connected set (AKM) or beyond the training firms (BLM classes) is limited. 
These considerations motivate a complementary, \emph{observable-anchored} approach that retains the two-way structure and sorting/decomposition logic while emphasizing prediction, interpretability, and out-of-sample use.

\section{Theoretical framework and identification}\label{a:theory}

This appendix provides the formal statistical foundation for TWICE. 
It makes explicit the data generating process, states the assumptions on observables and on the learner, and links the variance decomposition to the conditional expectation function (CEF) $m_0$ introduced in Section~\ref{ss:population_estimands}.

\subsection{Data generating process}

Let $Y_{it}$ denote the log wage of worker $i$ at time $t$, employed at firm $j=J(i,t)$. 
Let $X_j \in \mathcal{X} \subseteq \mathbb{R}^{d_x}$ and $Z_{it} \in \mathcal{Z} \subseteq \mathbb{R}^{d_z}$ denote firm and worker observables, respectively ($\mathcal{X}$ and $\mathcal{Y}$ denote the support of either).
We assume
\begin{equation*}
    Y_{it} = m_0(X_j, Z_{it}) + u_{it},
\end{equation*}
where $m_0 : \mathcal X \times \mathcal Z \to \mathbb R$ is a structural wage function and $u_{it}$ is an idiosyncratic error.

Identification of $m_0$ relies on the following:
\begin{assumption}[Sufficiency of observables]\label{ass:exogeneity}
The observables $(X_j,Z_{it})$ contain all systematically wage–relevant information, in the sense that
\[
    \E{u_{it} \mid X_j, Z_{it}} = 0.
\]
\end{assumption}

Let $P_{X,Z}$ denote the joint distribution of $(X_j,Z_{it})$.

\begin{proposition}[CEF representation]\label{prop:cef}
Under the data generating process above and Assumption~\ref{ass:exogeneity}, 
\[
    \E{Y_{it} \mid X_j = x, Z_{it} = z} = m_0(x,z)
\]
for $P_{X,Z}$–almost every $(x,z)$.
\end{proposition}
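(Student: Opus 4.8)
The plan is to obtain the conditional expectation function by taking conditional expectations on both sides of the data generating process and invoking Assumption~\ref{ass:exogeneity}; the argument is immediate once the measure-theoretic conventions are made explicit. First I would condition the identity $Y_{it} = m_0(X_j, Z_{it}) + u_{it}$ on the $\sigma$-algebra $\sigma(X_j, Z_{it})$ generated by the observables. By linearity of conditional expectation, this gives $\E{Y_{it} \mid X_j, Z_{it}} = \E{m_0(X_j, Z_{it}) \mid X_j, Z_{it}} + \E{u_{it} \mid X_j, Z_{it}}$ almost surely.

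Next I would simplify the two terms on the right. The first term involves $m_0(X_j, Z_{it})$, which is by construction a measurable function of the conditioning variables; by the ``taking out what is known'' property of conditional expectation it equals $m_0(X_j, Z_{it})$ almost surely. The second term vanishes identically by Assumption~\ref{ass:exogeneity}, which states $\E{u_{it} \mid X_j, Z_{it}} = 0$. Combining these yields the random-variable identity $\E{Y_{it} \mid X_j, Z_{it}} = m_0(X_j, Z_{it})$ almost surely. The final step is to translate this into the pointwise statement claimed: writing the conditional expectation in its Doob--Dynkin form as $\varphi(X_j, Z_{it})$ for a measurable $\varphi$, the almost-sure equality $\varphi(X_j, Z_{it}) = m_0(X_j, Z_{it})$ transfers, via the pushforward under the map $(i,t) \mapsto (X_j, Z_{it})$, to the assertion $\varphi(x,z) = m_0(x,z)$ for $P_{X,Z}$-almost every $(x,z)$, which is exactly the claimed identity.

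The only real subtlety---and hence the place where care is needed rather than an obstacle per se---is the measure-theoretic bookkeeping. One must take $m_0$ to be $P_{X,Z}$-measurable so that both the CEF and the pushforward are well defined, and one must track that conditional expectations are pinned down only up to $P_{X,Z}$-null sets, which is precisely the source of the ``almost every'' qualifier in the conclusion. No additional regularity or integrability conditions beyond those implicit in the existence of the CEF are required, so the result follows essentially by construction from the data generating process and Assumption~\ref{ass:exogeneity}.
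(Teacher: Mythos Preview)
Your proposal is correct and follows essentially the same approach as the paper: condition the DGP identity on $(X_j,Z_{it})$, use linearity and measurability of $m_0$ to pull it out, invoke Assumption~\ref{ass:exogeneity} to kill the error term, and pass to the $P_{X,Z}$-almost-everywhere statement. The paper's version is a terse two-line computation, while you spell out the Doob--Dynkin/pushforward step that justifies the ``almost every $(x,z)$'' qualifier, but the substance is identical.
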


\begin{proof}
By the law of iterated expectations,
\[
\E{Y_{it} \mid X_j,Z_{it}} 
= \E{ m_0(X_j,Z_{it}) + u_{it} \mid X_j,Z_{it}}
= m_0(X_j,Z_{it}) + \E{u_{it} \mid X_j,Z_{it}}
\]
Assumption~\ref{ass:exogeneity} implies $\E[u_{it} \mid X_j,Z_{it}] = 0$, hence
\(
\E{Y_{it} \mid X_j,Z_{it}} = m_0(X_j,Z_{it})
\)
which yields the stated equality for $P_{X,Z}$–almost every $(x,z)$.
\end{proof}

\subsection{Population partitions and approximation}

Section~\ref{ss:population_estimands} defined the worker and firm partition functions 
$g:\mathcal Z\to\{1,\dots,L\}$ and $h:\mathcal X\to\{1,\dots,K\}$, 
the $L^2$–optimal population partitions $(g^*,h^*)$, and the associated cell means
\[
    \mu_{\ell k} \equiv \E{Y_{it} \mid g^*(Z_{it})=\ell,\; h^*(X_j)=k},
    \qquad
    \mu \equiv \E{Y_{it}}
\]
For notational convenience, write $c_{it} \equiv \mu_{g^*(Z_{it}), h^*(X_j)}$ for the cell mean associated with observation $(i,t)$.
The residual in the decomposition is $\xi_{it} \equiv Y_{it} - c_{it}$.

Substituting $Y_{it} = m_0(X_j,Z_{it}) + u_{it}$ yields
\begin{equation*}
    \xi_{it}
    = \underbrace{ \big( m_0(X_j, Z_{it}) - c_{it} \big) }_{\text{approximation error}}
      + \underbrace{ u_{it} }_{\text{stochastic error}}
\end{equation*}
By the definition of $\mu_{\ell k}$ as conditional means,
\[
    \mu_{\ell k}
    = \E{Y_{it} \mid g^*(Z_{it})=\ell,\; h^*(X_j)=k}
    = \E{m_0(X_j,Z_{it}) \mid g^*(Z_{it})=\ell,\; h^*(X_j)=k}
\]
so
\[
    \E{ m_0(X_j,Z_{it}) - c_{it} \mid g^*(Z_{it}) = \ell,  h^*(X_j)=k } = 0
\]
Hence the approximation error is mean–zero within each cell and can be pooled with $u_{it}$ in the residual variance component $\var(\xi_{it})$.
Intuitively, $\xi_{it}$ collects pure noise plus within–cell heterogeneity that cannot be captured by a finite number of worker and firm groups.

\subsection{Approximation and estimation}\label{app:approx_est}

In practice, we do not observe $m_0$ and cannot directly work with $(g^*,h^*)$.
Instead, TWICE uses a sequence of tree–based predictors to approximate $m_0$ and to construct estimated partitions.
In this section, we formally characterize the properties of this approximation. For notational simplicity, we drop observation indices and write $(X,Y,Z)$ for $(X_j,Y_{it},Z_{it})$.

Let $\{\mathcal F_N\}_{N\ge 1}$ denote a (possibly growing) sequence of function classes from $\mathcal X \times \mathcal Z$ to $\mathbb R$ induced by the tuning restrictions of the tree--ensemble learner (depth, minimum leaf size, number of boosting iterations, early stopping, etc.). Let $L^2(P_{X,Z})$ be the (real) Hilbert space of measurable functions defined on the underlying sample space such that the squared function is integrable with respect to the measure 
$P_{X,Z}$. Define the best-in-class approximation to $m_0$ as any element
\[
    f_N^* \in \arg\min_{f\in\mathcal F_N} \E{ \bp{ m_0(X,Z) - f(X,Z) }^2 }.
\]
We do not require $f^*_N=m_0$. If, as $N\to\infty$, the approximation error 
\[ \inf_{f\in\mathcal F_N} \E{\bp{ m_0(X,Z)-f(X,Z)}^2 } \] 
tends to zero (e.g. if $\cup_N\mathcal F_N$ is dense in $L^2(P_{X,Z})$), then $f^*_N$ approximates $m_0$ arbitrarily well in $L^2(P_{X,Z})$.
Let $\widehat f_N$ be the predictor produced by the gradient–boosted tree algorithm on a sample of size $N$, with tuning parameters chosen as described in Section~\ref{ss:estimation}.
We posit the following assumption(s).

\begin{assumption}[Regularity for tree-based learners]\label{ass:tree_reg}
    The following conditions hold:
    \begin{enumerate}
        \item The support of $(X,Z)$ is compact and the CEF $m_0(x,z)$ is bounded and Lipschitz–continuous.
        \item The function class $\mathcal F_N$ consists of additive tree ensembles (finite sums of regression trees), with depth allowed to grow with $N$.
        \item Define the population and sample mean squared prediction errors
        \[
        Q(f) := \E{\bp{Y-f(X,Z)}^2}, \qquad Q_N(f) := \frac 1 N \sum_{n=1}^N \bp{Y_n-f(X_n,Z_n)}^2.
        \]
        The tuning/estimation procedure returns $\widehat f_N\in\mathcal F_N$ such that:
        (i) it is \[\sup_{f\in\mathcal F_N}\big|Q_N(f)-Q(f)\big|\xrightarrow{p}0;\]
        (ii) there exists $\varepsilon_N\downarrow 0$ with
        \[
        Q_N(\widehat f_N)\le \inf_{f\in\mathcal F_N}Q_N(f)+\varepsilon_N
        \quad\text{with probability tending to one.}
        \]
        \item Moments are uniformly bounded: $\E{Y^2} < \infty$ and $\sup_{f \in \mathcal F_N} \E{f(X,Z)^2} < \infty$ for all $N$.
        \item For each $N$, the set $\mathcal F_N$ is nonempty, closed, and convex in $L^2(P_{X,Z})$, so that the minimum defining $f_N^*$ is attained and $f_N^*$ is the $L^2(P_{X,Z})$--projection of $m_0$ onto $\mathcal F_N$.
    \end{enumerate}
\end{assumption}

These are standard regularity conditions ensuring that the problem is well-behaved. Existing results for tree ensembles imply that, with suitable tuning, the expected mean squared prediction error of $\widehat f_N$ can be made arbitrarily close to the best achievable within $\mathcal F_N$; see, for example, \citet{atheywager2018} and \citet{chiang2022multiway}. Proposition \ref{prop:consistency} restate these results in our setting.

\begin{proposition}[Prediction--error consistency]\label{prop:consistency}
Suppose Assumption \ref{ass:exogeneity} holds and Assumption \ref{ass:tree_reg} holds for the sequence of classes $\{\mathcal F_N\}$.
Let $f^*_N \in \arg\min_{f\in\mathcal F_N} \E{m_0(X,Z)-f(X,Z)^2}$.
Then
\[
\E{\bp{\widehat f_N(X,Z) - f_N^*(X,Z)}^2} \to 0 \quad\text{as } N\to\infty.
\]
If, moreover, $\inf_{f\in\mathcal F_N}\E{\bp{m_0(X,Z)-f(X,Z)}^2}\to 0$, then
\[
\E{\bp{\widehat f_N(X,Z)-m_0(X,Z)}^2}\to 0.
\]
\end{proposition}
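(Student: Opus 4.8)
The plan is to reduce the claim to a population least-squares problem and to exploit that, under Assumption~\ref{ass:exogeneity}, minimizing prediction error is the same as minimizing $L^2(P_{X,Z})$ distance to $m_0$. First I would record the orthogonal decomposition of the population risk: writing $Y=m_0(X,Z)+u$ and using $\E{u\mid X,Z}=0$, the cross term $\E{\bp{m_0-f}u}$ vanishes by iterated expectations for any $f$ that depends only on $(X,Z)$, so that
\[
Q(f)=\E{\bp{m_0(X,Z)-f(X,Z)}^2}+\E{u^2}.
\]
Since $\E{u^2}$ does not depend on $f$, the population risk minimizer over $\mathcal F_N$ coincides with the $L^2(P_{X,Z})$–projection of $m_0$, i.e.\ with $f_N^*$ as characterized in item~5 of Assumption~\ref{ass:tree_reg}.

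Second, I would bound the nonnegative excess risk $Q(\widehat f_N)-Q(f_N^*)$. Chaining the near-minimization property (item~3(ii)) with the uniform law of large numbers (item~3(i)) gives the standard sandwich $Q(\widehat f_N)\le Q_N(\widehat f_N)+\Delta_N\le Q_N(f_N^*)+\varepsilon_N+\Delta_N\le Q(f_N^*)+\varepsilon_N+2\Delta_N$, where $\Delta_N:=\sup_{f\in\mathcal F_N}\bigl|Q_N(f)-Q(f)\bigr|$. Since $\Delta_N\xrightarrow{p}0$ and $\varepsilon_N\downarrow 0$, the excess risk tends to zero in probability while remaining nonnegative because $f_N^*$ minimizes $Q$ over $\mathcal F_N$.

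Third—the crux—I would convert excess-risk control into the claimed $L^2$ bound via convexity. Because $\mathcal F_N$ is closed and convex, $f_N^*$ is a bona fide projection and satisfies the variational inequality $\E{\bp{m_0-f_N^*}\bp{f-f_N^*}}\le 0$ for every $f\in\mathcal F_N$. Evaluating it at $f=\widehat f_N$ and expanding $\E[X,Z]{\bp{m_0-\widehat f_N}^2}$ about $f_N^*$ yields, conditional on the training sample, the Pythagorean-type inequality
\[
\E[X,Z]{\bp{\widehat f_N-f_N^*}^2}\le Q(\widehat f_N)-Q(f_N^*).
\]
Taking expectation over the sample and invoking the uniform second-moment bounds (items~1 and~4), which make the excess risk a uniformly bounded sequence, the bounded convergence theorem upgrades the in-probability statement to $\E{\bp{\widehat f_N-f_N^*}^2}\to 0$. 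For the second claim I would then add and subtract $f_N^*$ and use $\bp{a+b}^2\le 2a^2+2b^2$ to get $\E{\bp{\widehat f_N-m_0}^2}\le 2\,\E{\bp{\widehat f_N-f_N^*}^2}+2\inf_{f\in\mathcal F_N}\E{\bp{m_0-f}^2}$, where both terms vanish under the first part and the added approximation hypothesis.

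I expect the third step to be the main obstacle. The excess-risk argument controls only $Q(\widehat f_N)-Q(f_N^*)$, a difference of \emph{distances to} $m_0$, whereas the proposition demands the distance \emph{between} $\widehat f_N$ and $f_N^*$; bridging the two genuinely requires convexity of $\mathcal F_N$, which is exactly why item~5 is imposed, since a non-convex tree class could harbor distinct near-minimizers lying far apart in $L^2$. The secondary difficulty is the passage from convergence in probability to convergence in expectation, which I would settle through the uniform boundedness of the risk afforded by items~1 and~4 rather than a more delicate uniform-integrability argument.
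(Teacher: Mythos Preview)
Your proposal is correct and follows essentially the same route as the paper: decompose $Q(f)=\E{(m_0-f)^2}+\E{u^2}$ via exogeneity, bound the excess risk by $2\Delta_N+\varepsilon_N$ using the ULLN and near-minimization conditions, convert excess risk into $\E{(\widehat f_N-f_N^*)^2}$ via the Hilbert-space projection inequality afforded by convexity of $\mathcal F_N$, upgrade from in-probability to in-expectation, and finish with the triangle inequality. The only cosmetic differences are that the paper invokes uniform integrability (from item~4) rather than bounded convergence for the upgrade step, and uses the $L^2$ triangle inequality rather than $(a+b)^2\le 2a^2+2b^2$ for the final claim; your identification of the convexity step as the crux is exactly right.
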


\begin{proof}
Fix $N$. For any $f\in\mathcal F_N$,
\[
Q(\widehat f_N)-Q(f) = \{Q(\widehat f_N)-Q_N(\widehat f_N)\} + \{Q_N(\widehat f_N)-Q_N(f)\}
+\{Q_N(f)-Q(f)\}.
\]
Taking $f=f_N^*$ and using Assumption \ref{ass:tree_reg}(c)(ii),
\[
Q(\widehat f_N)-Q(f_N^*) \le \{Q(\widehat f_N)-Q_N(\widehat f_N)\} +\{Q_N(f_N^*)-Q(f_N^*)\}
+\varepsilon_N.
\]
Hence,
\[
Q(\widehat f_N)-Q(f_N^*) \le 2\sup_{g\in\mathcal F_N}\big|Q_N(g)-Q(g)\big|+\varepsilon_N.
\]
Assumption \ref{ass:tree_reg}(c)(i) and $\varepsilon_N\to 0$ imply
$Q(\widehat f_N)-Q(f_N^*)\xrightarrow{p}0$.

Under Assumption \ref{ass:exogeneity}, $Y=m_0(X,Z) + u$ with $\E{\bp{u\mid X,Z}}=0$, so for any $f$,
\[
Q(f)= \E{\bp{m_0(X,Z)-f(X,Z)}^2} + \E{u^2},
\]
and therefore
\[
Q(\widehat f_N) - Q(f^*_N) = \E{\bp{m_0(X,Z)-\widehat f_N(X,Z)}^2} - \E{\bp{m_0(X,Z)-f^*_N(X,Z)}^2}.
\]
If $\mathcal F_N$ is closed and convex in $L^2(P_{X,Z})$, then $f^*_N$ is the $L^2$--projection of $m_0$ onto $\mathcal F_N$. Hence, by the properties of projections on Hilbert spaces:
\[
\E{\bp{m_0-\widehat f_N}^2} \ge \E{\bp{m_0 - f^*_N}^2} + \E{\bp{\widehat f_N - f^*_N}^2},
\]
so that:
\[
\E{\bp{\widehat f_N - f^*_N}^2} \le Q_N(\widehat f_N)-Q_N(f^*_N) \xrightarrow{p}0.
\]
Since $\E{\bp{\widehat f_N-f^*_N)^2}}\ge 0$, Assumption \ref{ass:tree_reg}(d) implies that
$\E{\bp{\widehat f_N-f^*_N)^2}}_N$ is uniformly integrable; hence
$\E{\bp{\widehat f_N-f^*_N)^2}}\to 0$.

Finally, by the triangular inequality:
\[
\|\widehat f_N-m_0\|_{L^2}
\le \|\widehat f_N-f^*_N\|_{L^2}+\|f^*_N - m_0\|_{L^2},
\]
and if the approximation error $\|f^*_N-m_0\|_{L^2}\to 0$, then $\|\widehat f_N-m_0\|_{L^2}\to 0$.
\end{proof}

\subsection{Cross--fitting and orthogonality}

Variance components in TWICE are computed using predictions from cross--fitted versions of $\widehat f_N$.
Partition worker identifiers into disjoint blocks $\{\mathcal I_a\}_{a=1}^B$ and firm identifiers into $\{\mathcal J_b\}_{b=1}^B$.
For each $(a,b)$, let
\[
S_{ab}=\{(i,t): i\in \mathcal I_a,\; J(i,t)\in \mathcal J_b\}
\]
be the holdout cell. 
Define the corresponding training index set
\[
S^{\mathrm{tr}}_{ab}:=\{(i,t): i\notin \mathcal I_a,\; J(i,t)\notin \mathcal J_b\}.
\]
Estimate $\widehat f^{(-ab)}$ on $S^{\mathrm{tr}}_{ab}$ and generate predictions
$\widehat f^{(-ab)}(X_{J(i,t)},Z_{it})$ only for $(i,t)\in S_{ab}$.

We (somewhat informally) assume a standard multiway--clustering structure:

\begin{assumption}[Dependence structure]\label{ass:dependence}
Errors $\{u_{it}\}$ may be arbitrarily dependent within worker clusters and within firm clusters, but are weakly dependent across distinct workers and distinct firms (so that standard asymptotic results apply).
\end{assumption}

Cross-fitting ensures that $\widehat f^{(-ab)}$ is trained on a sample that excludes the evaluation observation $(i,t)\in S_{ab}$, so that the only remaining source of correlation with $u_{it}$ is approximation/estimation error, which vanishes per the following results.

\begin{proposition}[Cross--fitted orthogonality]\label{lem:orthogonality}
Suppose Assumptions~\ref{ass:exogeneity}, \ref{ass:tree_reg}, and \ref{ass:dependence} hold, and that the prediction--error consistency property in Proposition~\ref{prop:consistency} holds for each fold estimator $\widehat f^{(-ab)}$ trained on $S^{\text{tr}}_{ab}$.
Then, for each $(a,b)$ and each $(i,t)\in S_{ab}$ (with $j=J(i,t)$),
\[
    \E{ \widehat f^{(-ab)}(X_j, Z_{it}) u_{it} } \longrightarrow 0
    \qquad \text{as } N\to\infty.
\]
\end{proposition}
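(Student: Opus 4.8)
The plan is to split the cross-fitted prediction into the true conditional expectation, which is orthogonal to the error by design, and a pure estimation-error piece, which I would control with the fold-level consistency already assumed. First I would add and subtract $m_0$ inside the expectation:
\[
\E{ \widehat f^{(-ab)}(X_j, Z_{it})\, u_{it} }
= \E{ m_0(X_j, Z_{it})\, u_{it} }
+ \E{ \bp{ \widehat f^{(-ab)}(X_j, Z_{it}) - m_0(X_j, Z_{it}) }\, u_{it} }.
\]
The first (``oracle'') term depends only on the evaluation observation $(i,t)$, so by the law of iterated expectations and Assumption~\ref{ass:exogeneity} it equals $\E{ m_0(X_j,Z_{it})\, \E{ u_{it} \mid X_j, Z_{it} } } = 0$ exactly, for every $N$.

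For the second (``estimation-error'') term I would apply the Cauchy--Schwarz inequality over the joint law of the training sample and the evaluation observation:
\[
\bigl| \E{ \bp{ \widehat f^{(-ab)}(X_j,Z_{it}) - m_0(X_j,Z_{it}) }\, u_{it} } \bigr|
\le \sqrt{ \E{ \bp{ \widehat f^{(-ab)}(X_j,Z_{it}) - m_0(X_j,Z_{it}) }^2 } }\; \sqrt{ \E{ u_{it}^2 } }.
\]
The second factor is bounded uniformly in $N$: Assumption~\ref{ass:tree_reg}(d) gives $\E{Y^2} < \infty$, which together with Assumption~\ref{ass:exogeneity} yields $\E{ u_{it}^2 } < \infty$. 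The first factor is exactly the out-of-sample prediction error of the fold estimator on its held-out cell $S_{ab}$, which tends to zero by the fold version of Proposition~\ref{prop:consistency} assumed in the hypothesis. Hence the product vanishes in the limit, and the two terms together deliver the claim.

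The hard part is conceptual rather than computational: justifying why this oracle/error decomposition is legitimate in the matched-data setting, which is exactly what cross-fitting and the multiway dependence structure buy us. Proposition~\ref{prop:consistency} is an out-of-sample statement---it controls the prediction error at a point asymptotically uninformed by the training data---whereas the holdout observation $(i,t)\in S_{ab}$ is not independent of $S^{\mathrm{tr}}_{ab}$ but only weakly dependent on it. The structural fact I would lean on is that $S^{\mathrm{tr}}_{ab}$ excludes the entire worker block $\mathcal I_a$ and firm block $\mathcal J_b$ containing $(i,t)$, so under Assumption~\ref{ass:dependence} the only channel linking $u_{it}$ to $\widehat f^{(-ab)}$ runs across distinct workers and firms, where dependence is weak. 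The delicate step is to confirm that this two-way block removal suffices to transfer the out-of-sample consistency to the weakly-dependent holdout point, and that the residual across-cluster dependence neither invalidates Proposition~\ref{prop:consistency} nor inflates $\E{ u_{it}^2 }$. Had one instead excluded only the single observation $(i,t)$ rather than its whole worker and firm blocks, $\widehat f^{(-ab)}$ could still absorb the within-worker and within-firm components of $u_{it}$, the oracle term would carry a non-vanishing overfitting bias, and the conclusion would fail---so it is precisely the two-sided block construction that makes the argument go through.
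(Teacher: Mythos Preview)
Your argument follows the same skeleton as the paper's: split off a deterministic-in-$(X_j,Z_{it})$ piece, kill it via Assumption~\ref{ass:exogeneity} and iterated expectations, then bound the remaining estimation-error term by Cauchy--Schwarz and fold-level consistency. The one difference is the centering function: you subtract the true CEF $m_0$, whereas the paper subtracts the best-in-class approximant $f^*_N$. This is not cosmetic. Proposition~\ref{prop:consistency} has two conclusions: the first, $\E{(\widehat f_N - f^*_N)^2}\to 0$, holds under Assumptions~\ref{ass:exogeneity}--\ref{ass:tree_reg}; the second, $\E{(\widehat f_N - m_0)^2}\to 0$, requires the \emph{additional} hypothesis that the approximation error $\inf_{f\in\mathcal F_N}\E{(m_0-f)^2}$ vanishes. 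The proposition you are proving assumes only Assumptions~\ref{ass:exogeneity}--\ref{ass:dependence} plus ``the prediction-error consistency property in Proposition~\ref{prop:consistency},'' which on a strict reading is the first conclusion. Your Cauchy--Schwarz factor $\E{(\widehat f^{(-ab)}-m_0)^2}^{1/2}$ is therefore not covered by the stated hypotheses.

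The fix is immediate: center on $f^*_N$ instead of $m_0$. Since $f^*_N$ is a fixed (non-random) function of $(X_j,Z_{it})$, the oracle term $\E{f^*_N(X_j,Z_{it})\,u_{it}}$ still vanishes by the same iterated-expectations argument, and the remaining factor $\E{(\widehat f^{(-ab)}-f^*_N)^2}^{1/2}$ is exactly what the first conclusion of Proposition~\ref{prop:consistency} controls. This is what the paper does. Your closing discussion of why two-sided block removal---rather than leave-one-observation-out---is essential for the argument is correct and more explicit than the paper's own treatment, which leaves that point to the surrounding text.
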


\begin{proof}
Fix $(a,b)$ and an observation $(i,t)\in S_{ab}$ with $j=J(i,t)$.
Write $f^*_N$ for the best-in-class function in Proposition~\ref{prop:consistency}, so that $f^*_N(X_j,Z_{it})$ is measurable with respect to $(X_j,Z_{it})$.

Decompose
\[
\E{\widehat f^{(-ab)}(X_j,Z_{it}) u_{it}}
=
\E{(\widehat f^{(-ab)}(X_j,Z_{it})-f^*_N(X_j,Z_{it}))\,u_{it}}
+
\E{f^*_N(X_j,Z_{it}) u_{it}}.
\]
The second term on the right-hand side equals zero by Assumption~\ref{ass:exogeneity} since, by the Law of Iterated Expectations:
\[
\E{f^*_N(X_j,Z_{it}) u_{it}}
=
\E{f^*_N(X_j,Z_{it}) \E{u_{it}\mid X_j,Z_{it}}}=0.
\]
Regarding the first term, the Cauchy--Schwarz inequality gives:
\begin{multline*}
\left| \E{ \bp{ \widehat f^{(-ab)}(X_j,Z_{it}) - f^*_N(X_j,Z_{it})} u_{it}} \right|\le\\ \le
\bp{\E{ \bp{\widehat f^{(-ab)}(X_j,Z_{it})-f^*_N(X_j,Z_{it})}^2}}^{1/2}
\cdot
\bp{ \E{u_{it}^2}}^{1/2}.
\end{multline*}
By Proposition~\ref{prop:consistency} applied to the fold estimator $\widehat f^{(-ab)}$ and the moment bound in Assumption~\ref{ass:tree_reg}(d), the first factor converges to zero. Because $\E{u_{it}^2}<\infty$, the full expression also converges to zero. This completes the proof.
\end{proof}

\subsection{Decomposition and sorting}
Given the population partitions $(g^*,h^*)$ and the cell means $\mu_{\ell k}$, the worker and firm components are defined via additive projection. 
Let $\pi_{\ell k}$ denote the population share of cell $(\ell, k)$, with marginals $\pi_\ell = \sum_k \pi_{\ell k}$ and $\pi_k = \sum_\ell \pi_{\ell k}$.
The projected effects $(\alpha^*, \psi^*)$ solve
\[
\min_{\alpha, \psi} \sum_{\ell k} \pi_{\ell k} \bp{\mu_{\ell k} - \mu -\alpha_\ell - \psi_k}^2
\]
subject to $\sum_\ell \pi_\ell \alpha_\ell = 0$ and $\sum_k \pi_k \psi_k = 0$.
This is a weighted two-way ANOVA decomposition of the cell means.
The first-order conditions yield:
\begin{align*}
    \alpha_\ell^* &= \sum_k \bp{\pi_{\ell k} / \pi_\ell} \bp{\mu_{\ell k} - \mu - \psi^*_k} \\
    \psi_k^* &= \sum_\ell \bp{\pi_{\ell k} / \pi_k} \bp{\mu_{\ell k} - \mu - \alpha^*_\ell}
\end{align*}
These form a system that can be solved iteratively or via direct matrix methods. 

The key property is that the interaction term $\kappa_{\ell k}^* = \mu_{\ell k} - \mu - \alpha_\ell^* - \psi_k^*$ satisfies
\[
\sum_{\ell, k} \pi_{\ell k} \alpha_\ell^* \kappa_{\ell k}^* = 0 
\qquad
\sum_{\ell, k} \pi_{\ell k} \psi_k^* \kappa_{\ell k}^* = 0 
\]
Hence $\cov{\alpha^*, \kappa^*} = \cov{\psi^*, \kappa^*} = 0$ at the observation level.

When $\kappa_{\ell k}=0$ for all $(\ell,k)$, the systematic component of wages is additive in worker and firm types; nonzero values of $\kappa_{\ell k}$ capture complementarities or mismatch at specific worker–firm combinations.

Moreover, when cells are balanced ($\pi_{\ell k} = \pi_\ell \pi_k$ for all $\ell, k$), the projection coincides with simple conditional means: $\alpha_\ell^* = \E{\mu_{\ell k} \mid \ell} - \mu$. 
Under sorting, where high-type workers concentrate in high-type firms ($\pi_{\ell k}  \neq \pi_\ell \pi_k$), the projection and conditional means differ. 
The projection isolates the “pure” worker effect from the firm types that workers of type $\ell$ endogenously match with.

The sorting component in the variance decomposition is
\[
    \text{Sorting} = 2 \cov{\alpha_{g^*(Z)}, \psi_{h^*(X)}}
    = 2 \sum_{\ell,k} \pi_{\ell k} \alpha_\ell \psi_k
\]
which measures \emph{assortative matching on observables}: the extent to which high–premium worker types match with high–premium firm types.

\subsection{Relationship with AKM}

The standard AKM model specifies
\[
    Y_{it} = \theta_i + \Psi_{J(i,t)} + r_{it}
\]
where $\theta_i$ and $\Psi_j$ are latent worker and firm effects and $r_{it}$ is an idiosyncratic error.
For comparison with the TWICE decomposition, it is useful to write the latent firm effect as
\[
    \Psi_j = \psi_{h^*(X_j)} + \nu_j
\]
where $\psi_{h^*(X_j)}$ is the firm component implied by the TWICE partition and $\nu_j$ is a residual term combining unobserved firm heterogeneity and estimation noise.

In this representation, $\psi_{h^*(X_j)}$ can be interpreted as the part of the AKM firm premium that is systematically associated with observable firm characteristics $X_j$, while $\nu_j$ captures features of $\Psi_j$ that are orthogonal to the TWICE partition (together with sampling noise in finite samples).
In sparse mobility networks, the variance of the sampling noise in $\widehat\Psi_j$ is large, implying $\var(\widehat \Psi) > \var(\Psi)$.
TWICE regularizes $\Psi_j$ by restricting attention to the component that is predictable from observables, effectively discarding much of the sampling noise as well as any unobserved firm heterogeneity that does not load on $X_j$.

This trade–off underlies the interpretation of TWICE firm effects as \emph{observable} wage premia and of TWICE sorting as assortative matching on observables, as discussed in Section~\ref{ss:identification}.

\section{Data and sample construction}\label{app:data}

This appendix provides details on sample selection, variable definitions, and descriptive statistics.

\subsection{Sample selection}

Starting from the matched QP-SCIE data for 2012--2019, we apply the following filters:
\begin{itemize}[nosep]
\item \emph{Firms:} Private sector; at least 5 employees; non-missing EBITDA.
\item \emph{Workers:} Full-time; aged 20-65; tenure $\geq$ 2 months; positive annual earnings.
\item \emph{Wages:} Log hourly wage, constructed as $\log(\text{annual earnings} / \text{annual hours})$.
\item \emph{Connectivity:} Restricted to the largest connected set of firms linked by worker mobility, ensuring comparability with AKM.
\end{itemize}
Table~\ref{tab:connectivity} reports connectivity statistics.
The largest connected set retains 99.6\% of workers and 97.6\% of firms, reflecting a well-integrated labor market with substantial worker mobility across employers.

\subsection{Descriptive statistics}

This subsection provides descriptive statistics for the matched employer--employee panel used in the empirical analysis. 
Unless otherwise noted, all moments are computed on the estimation sample restricted to the largest connected set of the worker--firm mobility graph over 2012--2019. 
\autoref{tab:stat_panel} summarizes annual cross-sections; \autoref{tab:stat_industry_19} reports the industry composition in 2019; \autoref{tab:connectivity} describes mobility and connectivity, which are key for AKM-style identification. \autoref{tab:firm_size_distrib} and \autoref{tab:tenure_age} document the distributions of firm size (number of employees observed in the sample), worker age, and firm tenure; hourly wages are expressed in logs when indicated.

\begin{table}[h!]
\caption{Panel summary by year}
\label{tab:stat_panel}
\centering
\begin{tabular*}{.9\textwidth}[]{p{2.5cm}@{\extracolsep\fill}ccccc}
\toprule
Year & \# Firms & \# Workers & \makecell{Avg. \\ log-wage} & \makecell{Sh. \\ graduates} & \makecell{Sh. \\ managers} \\
\midrule
2012 & 43,162 & 349,195 & 1.97 & 0.17 & 0.09 \\
2013 & 42,054 & 351,095 & 1.99 & 0.18 & 0.09 \\
2014 & 43,697 & 376,585 & 2.00 & 0.19 & 0.09 \\
2015 & 46,131 & 406,379 & 2.00 & 0.20 & 0.09 \\
2016 & 49,171 & 439,757 & 2.01 & 0.20 & 0.09 \\
2017 & 52,123 & 474,678 & 2.03 & 0.21 & 0.09 \\
2018 & 54,525 & 496,647 & 2.06 & 0.21 & 0.10 \\
2019 & 55,330 & 507,366 & 2.10 & 0.22 & 0.10 \\
\bottomrule
\end{tabular*}

\note[Note]{
Annual cross-sections from the matched employer-employee panel (Quadros de Pessoal), restricted to the largest connected set.
Log-wage is the natural logarithm of hourly wages.
Graduates: share with tertiary education (ISCED 5+).
Managers: share in managerial occupations.
The steady growth in employment and graduate share reflects Portugal's post-crisis recovery and workforce upskilling.
}
\end{table}

\begin{table}[h!]
\caption{Industry composition (2019)}
\label{tab:stat_industry_19}
\centering
\begin{tabular*}{\textwidth}[]{p{5cm}@{\extracolsep\fill}cccccc}
\toprule
Industry & \# Firms & \# Workers & \makecell{Avg.\\log-wage} & \makecell{Avg.\\firm size} & \makecell{Sh.\\graduates} & \makecell{Sh.\\managers} \\
\midrule
wholesale, retail, accommodation & 20,515 & 146,900 & 2.07 & 7.16 & 0.22 & 0.09 \\
manufacturing & 12,815 & 140,433 & 2.05 & 10.96 & 0.17 & 0.07 \\
construction & 7,270 & 46,913 & 2.16 & 6.45 & 0.14 & 0.10 \\
services & 6,319 & 94,933 & 2.12 & 15.02 & 0.33 & 0.16 \\
personal services & 3,061 & 20,991 & 2.12 & 6.86 & 0.41 & 0.18 \\
logistics & 2,562 & 36,945 & 2.28 & 14.42 & 0.12 & 0.05 \\
agriculture & 1,403 & 7,331 & 2.01 & 5.23 & 0.15 & 0.06 \\
PA & 724 & 3,082 & 2.04 & 4.26 & 0.51 & 0.32 \\
energy & 319 & 6,722 & 2.22 & 21.07 & 0.25 & 0.18 \\
mining & 244 & 1,953 & 2.28 & 8.00 & 0.11 & 0.07 \\
finance & 98 & 1,163 & 2.51 & 11.87 & 0.72 & 0.34 \\
\bottomrule
\end{tabular*}

\note[Note]{
Cross-sectional summary by industry for 2019.
Industries ranked by number of workers.
Finance pays the highest wages and employs the largest share of graduates; manufacturing and agriculture pay below average despite substantial employment shares.
}
\end{table}

\begin{table}[h!]
\caption{Mobility and connectivity}
\label{tab:connectivity}
\centering
\begin{tabular*}{.9\textwidth}[]{p{6cm}@{\extracolsep\fill}c}
\toprule
Metric & Value \\
\midrule
Workers (pre-connectivity) & 755,698 \\
Firms (pre-connectivity) & 97,981 \\
Workers (connected sample) & 751,866 \\
Firms (connected sample) & 95,665 \\
Observations (connected sample) & 3,401,702 \\
Mean firms per worker & 2.170 \\
Mean workers per firm & 16.740 \\
Share of workers with >=3 firms & 0.214 \\
Worker retention after connectivity filter & 0.996 \\
\bottomrule
\end{tabular*}

\note[Note]{
Statistics for the employer-employee mobility graph (2012-2019).
Mean firms per worker indicates average number of distinct employers observed.
The largest connected set (LCS) retains nearly all observations, indicating a well-connected labor market suitable for AKM estimation.
}
\end{table}

\begin{table}[h!]
\caption{Firm size distribution by year}
\label{tab:firm_size_distrib}
\centering
\begin{tabular*}{.9\textwidth}[]{p{2.5cm}@{\extracolsep\fill}ccccc}
\toprule
Year & Mean size & p50 & p75 & p90 & p99 \\
\midrule
2012 & 8.09 & 3.00 & 6.00 & 14.00 & 76.39 \\
2013 & 8.35 & 3.00 & 6.00 & 14.00 & 82.00 \\
2014 & 8.62 & 3.00 & 6.00 & 14.00 & 87.00 \\
2015 & 8.81 & 3.00 & 6.00 & 15.00 & 90.00 \\
2016 & 8.94 & 3.00 & 6.00 & 15.00 & 90.00 \\
2017 & 9.11 & 3.00 & 7.00 & 15.00 & 94.78 \\
2018 & 9.11 & 3.00 & 7.00 & 15.00 & 97.00 \\
2019 & 9.17 & 3.00 & 7.00 & 15.00 & 101.00 \\
\bottomrule
\end{tabular*}

\note[Note]{
Distribution of firm size (number of employees) in the estimation sample.
The median firm has 3 employees; the 99th percentile reaches 75-100 employees.
The right-skewed distribution motivates log-transformation of size variables in the predictive model.
}
\end{table}

\begin{table}[h!]
\caption{Worker age and tenure by year}
\label{tab:tenure_age}
\centering
\begin{tabular*}{.9\textwidth}[]{p{2.5cm}@{\extracolsep\fill}cccccc}
\toprule
Year & Age p25 & Age p50 & Age p75 & Tenure p25 & Tenure p50 & Tenure p75 \\
\midrule
2012 & 29.0 & 35.0 & 43.0 & 15.0 & 41.0 & 95.0 \\
2013 & 29.0 & 35.0 & 43.0 & 10.0 & 36.0 & 88.0 \\
2014 & 29.0 & 36.0 & 43.0 & 8.0 & 24.0 & 78.0 \\
2015 & 29.0 & 36.0 & 44.0 & 8.0 & 20.0 & 65.0 \\
2016 & 29.0 & 37.0 & 44.0 & 7.0 & 20.0 & 46.0 \\
2017 & 30.0 & 37.0 & 45.0 & 7.0 & 19.0 & 42.0 \\
2018 & 30.0 & 38.0 & 46.0 & 8.0 & 19.0 & 41.0 \\
2019 & 31.0 & 39.0 & 47.0 & 9.0 & 21.0 & 44.0 \\
\bottomrule
\end{tabular*}

\note[Note]{
Distribution of worker age (years) and firm tenure (months).
Median age rises from 35 to 39 over the panel, reflecting workforce aging.
Median tenure falls sharply from 41 to 21 months between 2012 and 2016, then stabilizes—consistent with increased labor market flexibility during the recovery period.
}
\end{table}

\clearpage

\subsection{Covariate definitions}\label{app:covariates}

Table~\ref{tab:covariates} reports descriptive statistics for the covariates used in TWICE.
Panel~A covers worker-level variables from QP; Panel~B covers firm-level variables from SCIE; Panel~C summarizes categorical variables.

\begin{table}[p]
\caption{Covariate descriptive statistics}
\label{tab:covariates}
\centering
{
\begin{tabular*}{\textwidth}[]{p{5.0cm}@{\extracolsep\fill}cccc}
\toprule
\multicolumn{5}{l}{\textbf{Panel A. Worker covariates}} \\
Covariate & Mean / SD & p25 & p50 & p75 \\
\midrule
Age & \makecell{37.62 \\ (9.92)} & 30.00 & 37.00 & 45.00 \\
Tenure (months) & \makecell{49.30 \\ (69.34)} & 8.00 & 23.00 & 57.00 \\
Job order (seniority) & \makecell{1.69 \\ (0.73)} & 1.00 & 2.00 & 2.00 \\
\addlinespace
\multicolumn{5}{l}{\textbf{Panel B. Firm covariates}} \\
Covariate & Mean / SD & p25 & p50 & p75 \\
\midrule
Firm age & \makecell{22.85 \\ (61.18)} & 9.00 & 18.00 & 28.00 \\
Log workers & \makecell{4.76 \\ (2.01)} & 3.14 & 4.35 & 6.01 \\
Log revenue & \makecell{15.74 \\ (2.21)} & 14.05 & 15.61 & 17.36 \\
Log revenue per worker & \makecell{10.98 \\ (1.14)} & 10.222 & 10.973 & 11.714 \\
\makecell[l]{Log revenue / \\ sales ratio} & \makecell{9.97 \\ (6.99)} & 0.000 & 12.487 & 15.535 \\
Solvency ratio & \makecell{1.75 \\ (426.78)} & 0.12 & 0.35 & 0.58 \\
Asset ratio & \makecell{0.02 \\ (0.06)} & 0.000 & 0.001 & 0.005 \\
R\&D investment & \makecell{10185.50 \\ (138362.43)} & 0.00 & 0.00 & 0.00 \\
Patent investment & \makecell{240232.98 \\ (5684569.98)} & 0.00 & 0.00 & 0.00 \\
Capital per worker (PPE) & \makecell{33622.63 \\ (161457.57)} & 1692.66 & 9451.64 & 28730.20 \\
Market concentration (\%) & \makecell{0.00 \\ (0.01)} & 0.019 & 0.058 & 0.135 \\
\addlinespace
\end{tabular*}
\begin{tabular*}{1.0\textwidth}[]{p{5.0cm}@{\extracolsep\fill}c}
\multicolumn{2}{l}{\textbf{Panel C. Categorical covariates}} \\
Covariate & Mode (share) \\
\midrule
Sector (NACE) & wholesale, retail, accommodation (0.29) \\
Education level & at most primary (0.42) \\
Qualification & specialized workers (0.57) \\
Gender & Male (0.61) \\
Legal form & Sociedade por Quotas (0.47) \\
Region (NUTS2) & Area Metropolitana de Lisboa (0.40) \\
\bottomrule
\end{tabular*}
}

\note{
Descriptive statistics for covariates used in TWICE estimation.
\emph{Panel~A (Worker covariates):} Age and tenure are continuous; job order indicates seniority rank within firm.
\emph{Panel~B (Firm covariates):} 
Log revenue per worker serves as the primary productivity measure.
Solvency ratio (equity/debt) exhibits extreme outliers (SD = 427); we use the raw variable but note that tree-based methods are robust to such skewness.
R\&D and patent investment are zero for most firms (median = 0), reflecting that innovation activity is concentrated among few firms.
Market concentration is the firm's employment share within its industry-region cell; values appear as 0.00 due to rounding but range from 0.02 to 0.11 at the quartiles.
\emph{Panel~C (Categorical covariates):} Reports modal category and its share.
Sector and region are included as categorical predictors; education and qualification define worker skill groups.
}
\end{table}

\subsection{Mobility event study}\label{app:event_study}

To assess the exogenous mobility assumption underlying the AKM benchmark, we replicate the event-study design of \cite{card2013workplace}.
Firms are classified into quartiles based on mean coworker wages (excluding the mover).
Figure~\ref{fig:chk_es} plots wage trajectories for workers transitioning between the bottom (Q1) and top (Q4) quartiles.

The results display the canonical patterns: wages are flat before the move (no pre-trends), and gains from moving up (Q1$\to$Q4) are approximately symmetric to losses from moving down (Q4$\to$Q1).
This symmetry supports the additive separability assumption of the AKM framework in our sample.

\begin{figure}[h!]
\caption{Mobility event study}
\label{fig:chk_es}
\centering
\includegraphics[width=0.9\textwidth]{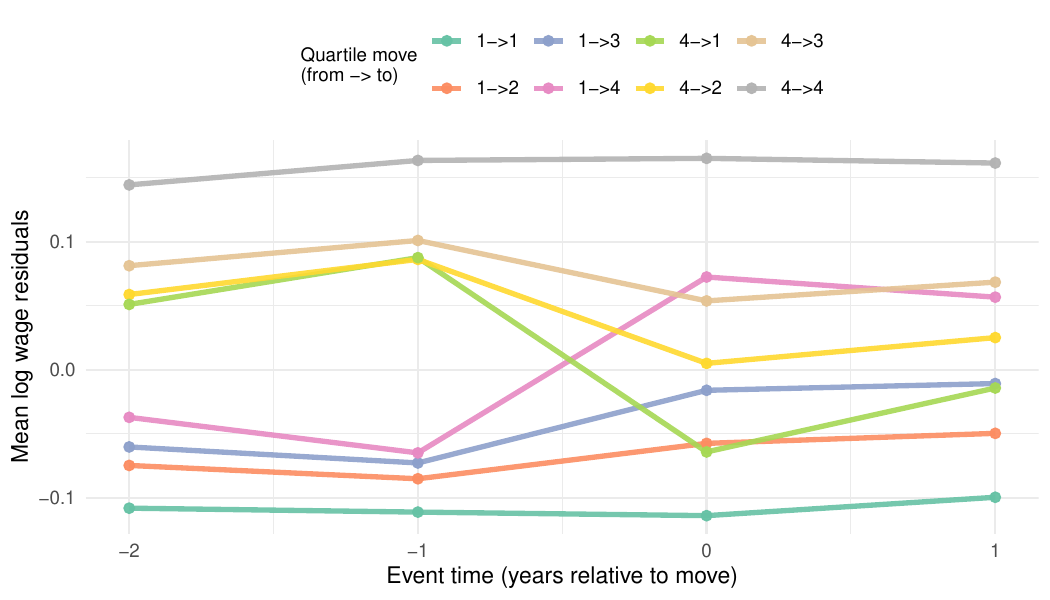}
\note[Note]{
Event study of log hourly wages around job transitions, following \cite{card2013workplace}.
Firms classified into quartiles by mean coworker wage (excluding the mover).
Year 0 is the first year at the new firm.
Flat pre-trends and symmetric wage changes for upward vs. downward moves support exogenous mobility and additivity.
}
\end{figure}

\section{TWICE estimation details}\label{a:twice}

This appendix documents implementation choices: hyperparameter selection, the two-way cross-fitting scheme, and interpretability diagnostics.

\subsection{Hyperparameters and model selection}

We select LightGBM hyperparameters, including tree depth and the number of worker- and firm-leaves (which define the cell granularity), by minimizing the two-way blocked predictive risk:
\begin{equation}\label{eq:mainloss}
\mathcal{L} = \frac{1}{B^2} \sum_{a=1}^B\sum_{b=1}^B \frac{1}{|S_{ab}|}\sum_{(i,t)\in S_{ab}}\Big(Y_{it}-\widehat f^{(-ab)}(X_{j(i,t)},Z_{it})\Big)^2
\end{equation}
This loss averages squared prediction errors over all worker–firm validation blocks and ensures robustness to the dependence structure. 
Because each leaf index corresponds to a worker or firm cell, this tuning step jointly determines model complexity and the coarseness of the partitions. 
When the true heterogeneity is close to discrete, the leaves approximate the underlying classes; when it is smoother, they provide a flexible step-function approximation.

Specifically, TWICE relies on two key granularity parameters: the number of worker classes $L$ and firm classes $K$. 
We consider a grid $\{64, 128, 256, 512\}$ for each side and select the pair $(K^\star,L^\star)$ that minimizes $\mathcal{L}$.
Table~\ref{tab:grid_KL_loss} reports the grid search results; the optimal configuration is $K=512$ firm classes and $L=512$ worker classes.

Partitioning and the final predictor use gradient-boosted trees (LightGBM) with standard regularization: learning rate 0.08, early stopping after 80 rounds without improvement, max depth 15, and minimum leaf size 30.
Categorical features are passed natively; factor levels are aligned between train and test splits.
All random draws use fixed seeds for reproducibility.

\begin{table}[t]
\caption{Blocked validation loss by grid of worker/firm cell counts}
\label{tab:grid_KL_loss}
\centering
    \begin{tabular*}{0.7\textwidth}[]{p{2.2cm}@{\extracolsep\fill}cccc}
\toprule
Firm classes $K$ & \multicolumn{4}{c}{Worker classes $L$} \\
 & 64 & 128 & 256 & 512 \\
\midrule
 64 & 0.11987 & 0.11269 & 0.10875 & 0.11297 \\
128 & 0.11713 & 0.11049 & 0.11043 & 0.10982 \\
256 & 0.12191 & 0.11167 & 0.11116 & 0.11046 \\
512 & 0.12173 & 0.11016 & 0.11015 & 0.10859 * \\
\bottomrule
\end{tabular*}

\note[Note]{
Mean squared error from two-way ID-blocked cross-validation across different numbers of firm classes ($K$, rows) and worker classes ($L$, columns).
Asterisk marks the minimum.
The optimal configuration is $K=512$, $L=512$.
}
\end{table}

\subsection{Two-way ID-blocked cross-fitting}
We respect the two-sided dependence in matched data by blocking on \emph{IDs} rather than on observations. 
The training sample (restricted to the connected set) is partitioned into $B=5$ disjoint blocks of worker IDs and $B=5$ disjoint blocks of firm IDs, inducing $B^2=25$ validation cells $\{S_{ab}\}_{a,b=1}^B$. 
For each $(a,b)$ we train the predictor on the complement of $S_{ab}$—i.e., excluding \emph{all} rows involving any worker in block $a$ or any firm in block $b$—and we score only $S_{ab}$. 
No worker or firm ever appears in both the training data and the held-out cell used to score that worker–firm observation. 
The cross-fitted risk used for model selection is the average MSE across all cells as per equation~\eqref{eq:mainloss}.

\subsection{External test set}

For an untouched benchmark, we draw a firm-level holdout from the connected set and sample workers within those firms. 
All rows linked to held-out firms are excluded from model fitting and tuning (including cross-fitting folds). 
We report MSE and $R^2$ on both train and test; $R^2$ is the squared correlation between $Y$ and $\widehat Y$.

\subsection{Interpretability diagnostics}

TWICE produces three types of model diagnostics: variable importance, partial dependence plots (PDPs), and accumulated local effects (ALEs).
These tools summarize which covariates drive the partitions and how they relate to predicted wages.

\paragraph{Variable importance}
Figure~\ref{f:rel_importance} reports LightGBM variable-importance measures (``average gain'') for the three TWICE models.
Importance is the total reduction in the loss function attributed to splits on each feature, normalized to sum to one.

\begin{figure}[t]
\caption{Variable importance}\label{f:rel_importance} 
\subcaptionbox{Firm partition\label{subf:firm_importance}}{\includegraphics[scale=0.43]{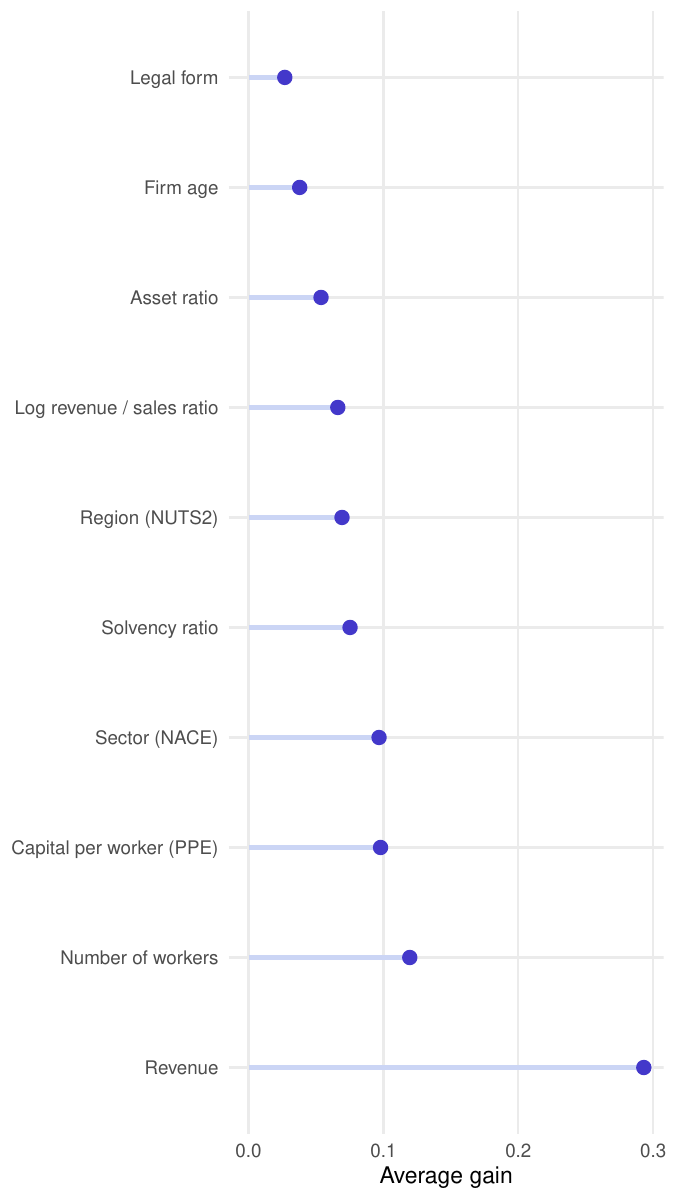}}\hfill
\subcaptionbox{Worker partition\label{subf:worker_importance}}{\includegraphics[scale=0.43]{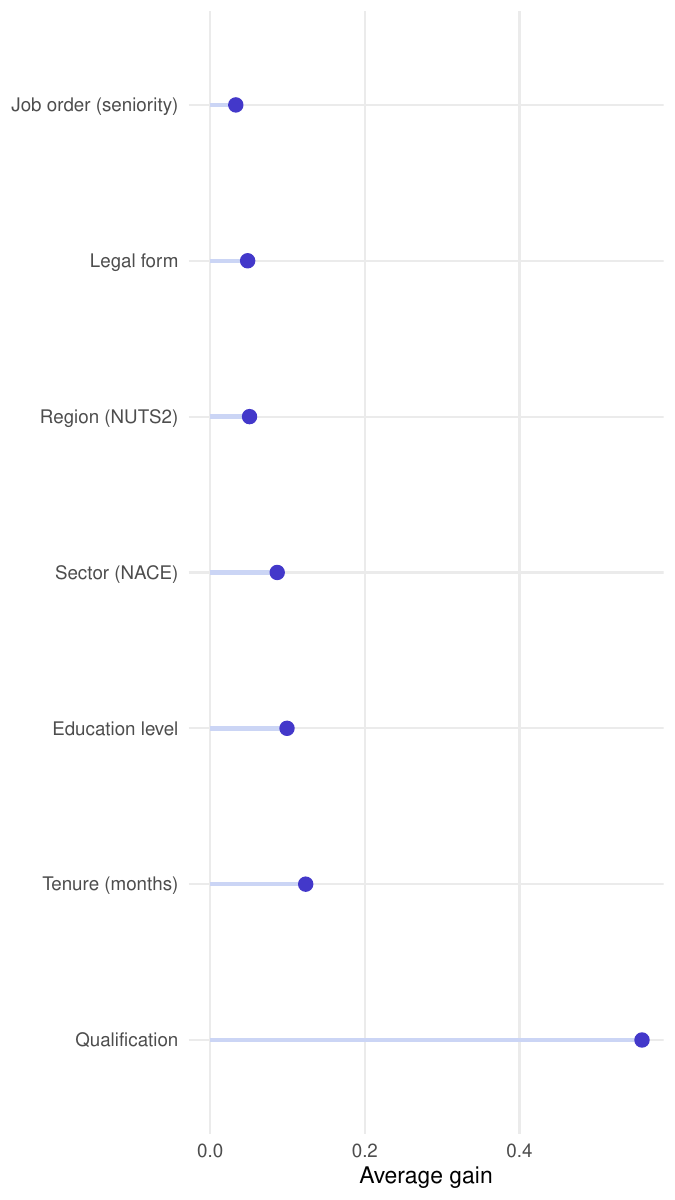}}\hfill
\subcaptionbox{Conditional wage model\label{subf:worker_firm_importance}}{\includegraphics[scale=0.43]{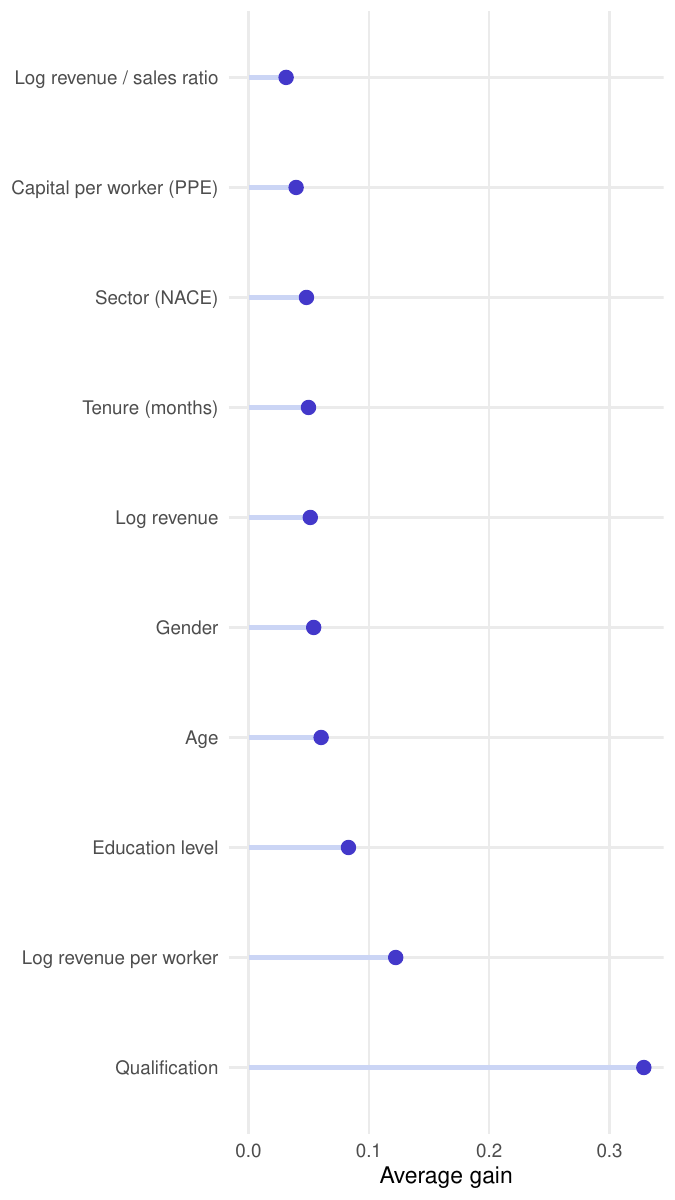}}
\note{
Relative variable importance for TWICE classification trees.
Panel~A: features used to partition firms into classes $h(j)$.
Panel~B: features used to partition workers into classes $g(i)$.
Panel~C: features used in the final conditional wage predictor.
}
\end{figure}

Panel~A (firms) shows that log revenue is the most important predictor, followed by sector, number of workers, region, capital per worker, and financial ratios (revenue/sales, solvency). 
These variables account for most splits used to classify firms into wage-relevant groups.

Panel~B (workers) indicates that qualification, tenure, and education are the main predictors of worker classification. 
Sector, region, legal form, and job seniority contribute additional separation.

Panel~C (conditional wage model) shows that qualification and education remain important, but log revenue and age enter prominently once firm characteristics are included. 
Gender, tenure, and capital per worker play smaller roles.

\paragraph{PDP implementation}\label{app:pdps}
For a focal feature $v_S$, we evaluate a 40-point quantile grid (trimmed to the 10th--90th percentiles).
For each grid value $s$, we construct a modified dataset by replacing the focal coordinate $V_S$ by $s$ for every observation, compute predictions, and average across observations. 
Predictions are then averaged across the 25 cross-fitted models.

For the firm-size curve, we report a \emph{conditional} PDP: in addition to setting firm size to $s$, we fix log revenue per worker at its sample median for every observation, while leaving all remaining covariates at their observed values.

\paragraph{ALE implementation}
ALEs complement PDPs when predictors are correlated. 
For a continuous feature $x$, we partition its support into 40 grid points, compute average finite differences within each bin conditional on observed $X_{-x}$, and accumulate across bins; the curve is mean-centered by construction. 
Display is restricted to the 10th--90th percentile range of each feature to avoid extrapolation.
For tenure, which has a highly right-skewed distribution, we display the x-axis on a log scale.

\subsection{Summary of the estimation procedure}

The estimation routine is summarized in Algorithm \ref{alg:twice}.

\begin{algorithm}[t]
\caption{TWICE Estimation Procedure}
\label{alg:twice}
\begin{algorithmic}[1]
\Require Matched worker–firm data $\{Y_{it},X_j,Z_{it}\}$ restricted to the largest connected set; candidate grids $\mathcal K$ and $\mathcal L$ for the numbers of firm and worker classes
\For{each $(K,L)\in \mathcal K \times \mathcal L$}
  \State \emph{Firm partition:} estimate a supervised tree-based regression of $Y_j$ on $X_j$ with at most $K$ leaves; let $\kappa_K(j)\in\{1,\dots,K\}$ be the induced firm classes
  \State \emph{Worker partition:} estimate a supervised tree-based regression of $Y_{it}$ on $Z_{it}$ with at most $L$ leaves; let $\lambda_L(i)\in\{1,\dots,L\}$ be the induced worker classes
  \State \emph{Wage prediction:} estimate the conditional wage function using two-way cross-fitting on worker and firm IDs, with regressors $(X_j,Z_{it},\kappa_K(j),\lambda_L(i))$
  \State \emph{Model selection:} compute the blocked out-of-sample loss $\mathcal L(K,L)$ as in~\eqref{eq:mainloss}
\EndFor
\State Select $(K^\ast,L^\ast)=\arg\min_{K,L}\mathcal L(K,L)$
\State Re-estimate the final model on the training portion of the connected set using $(K^\ast,L^\ast)$ and evaluate on a separate firm-level holdout sample
\end{algorithmic}
\end{algorithm}

\section{Robustness of the firm partition\label{app:robustness_firm}}

This appendix assesses whether the TWICE variance decomposition is sensitive to how the firm partition is constructed in the first-stage supervised learning step. 
In the baseline approach, firm-year types are obtained by training a LightGBM model that predicts a firm-year wage target using observable firm characteristics, and defining firm classes by the model’s leaf assignments (holding the tuned leaf-count fixed). 
Since firm-year average wages may reflect both firm pay policies and the composition of workers employed at the firm, we consider alternative targets designed to reduce the role of workforce composition.

\subsection{Alternative targets}
Let $W_{it}$ denote the hourly wage measure used in the first-stage firm model, and let $j = J(i,t)$ index the firm employing worker $i$ in year $t$.
We consider three firm-year targets:

\paragraph{(i) Baseline (mean target)}
The firm tree is trained using the firm’s average log wage, 
\[ \bar W_{jt}^{\text{mean}} = \frac{1}{n_{jt}}\sum_{i,t:J(i,t)=j} W_{it} \]

\paragraph{(ii) Median target}
\[ \bar W_{jt}^{\text{med}} = \text{median} \bc{W_{it} : J(i,t) = j} \]
This reduces sensitivity to outliers and extreme observations while still using a firm-year wage summary.

\paragraph{(iii) Residual target (worker-only residual aggregation)}
We first estimate a worker-only wage predictor $\widehat m(X_{it})$ using a cross-fitted LightGBM model based on worker characteristics only (in the current implementation: year, tenure, age, sex, education, qualification, sector, and region). Using cross-fitting by worker groups, we compute residuals
\[ \widehat \varepsilon_{it} = W_{it} - \widehat m_{-k(i)}(X_{it}), \]
where $\widehat m_{-k(i)}$ is trained excluding the fold containing worker $i$.
We then aggregate residuals to the firm-year level:
\[ \bar W_{jt}^{\text{res}} \frac{1}{n_{jt}} \sum_{i : J(i,t) = j} \widehat \varepsilon_{it} \]
The firm model is then trained to predict $W_{jt}^{\text{res}}$ using the same firm covariates and the same tuned leaf count as in the baseline.

Across all three specifications, we hold the worker partition fixed and only retrain the firm partition under the alternative firm-year targets, then recompute the TWICE decomposition.

\paragraph{Results}

\begin{table}[t]
\caption{Robustness}
\label{tab:robustness}
\centering
\begin{tabular*}{.9\textwidth}[]{p{3.8cm}@{\extracolsep\fill}ccccc}
\toprule
Specification & Worker & Firm & Sorting & Interaction & Residual \\
\midrule
Baseline (mean) & 0.276 & 0.087 & 0.116 & 0.073 & 0.448 \\
Median target & 0.275 & 0.091 & 0.117 & 0.075 & 0.443 \\
Residual target & 0.317 & 0.086 & 0.085 & 0.079 & 0.433 \\
\bottomrule
\end{tabular*}

\note[Note]{
Each entry reports the share of total variance of log wages attributed to the TWICE worker component, firm component, sorting term, interaction term, and the within-cell residual.
Across specifications we keep the worker partition fixed and retrain the firm partition using alternative first-stage targets (mean, median, or worker-residual firm-year aggregates), then recompute the TWICE decomposition under the resulting firm classes.
}
\end{table}

Table \ref{tab:robustness} reports the TWICE variance-share decomposition under the three firm targets. 
The baseline and median targets deliver nearly identical decompositions. 
The residual-target specification mechanically shifts variance away from the firm and sorting components and toward the worker component, consistent with the fact that the residual target is constructed to remove worker-composition-driven variation from the firm-year target used to define firm classes. 
Importantly, the qualitative structure of the decomposition remains: there is still a meaningful firm component, a sorting term, and a non-trivial interaction component.

\begin{table}[t]
\caption{Correlation}
\label{tab:correlation}
\centering
\begin{tabular*}{0.6\textwidth}[]{p{7cm}@{\extracolsep\fill}c}
\toprule
Metric & Value \\
\midrule
Spearman corr. (firm component psi) & 0.752 \\
\bottomrule
\end{tabular*}

\note[Note]{
Spearman correlation comparing the baseline (mean-target) and residual-target specifications. The metric is computed on the firm-year firm component $\psi$ implied by the TWICE additive projection under each partition.
}
\end{table}

Table \ref{tab:correlation} reports the Spearman correlation between the firm-year firm component $\psi$ implied by the baseline partition and by the residual-target partition. 
The correlation is sizeable, indicating that—even when the firm partition is trained on a worker-residual target—the resulting firm pay component remains broadly aligned with the baseline ranking of firm heterogeneity captured by TWICE.

Overall, these results show that the firm partition—and the resulting variance decomposition—is not driven by the specific choice of firm-level target, and the main  empirical conclusions of the paper are robust to alternative constructions of firm types.

\end{document}